\newtheorem{thm}{Theorem}[section]
\theoremstyle{definition} \newtheorem{defn}[thm]{Definition}
\theoremstyle{question} 
\theoremstyle{remark} 
\newcommand*{\rom}[1]{\expandafter\@slowromancap\romannumeral #1@}
\def\bege{\begin{equation}} \def\ende{\end{equation}}
   \def\begr{\begin{eqnarray}}
	\def\endr{\end{eqnarray}} 
\def\bege{\begin{equation}} \def\ende{\end{equation}}
\def\begr{\begin{eqnarray}} \def\endr{\end{eqnarray}} \def\bnum{\begin{enumerate}} \def\enum{\end{enumerate}}
\begin{document}
	\begin{center}\Large
		\textbf{Computational analysis of NM-polynomial based topological indices and graph-entropies of carbon nanotube Y-junctions}
	\end{center}
	\begin{center}
		Sohan Lal$^{1}$, Vijay Kumar
		Bhat$^{1,}$$^{\ast}$, Sahil Sharma$^{1}$
	\end{center}
	
	\begin{center}
		$^{1}$School of Mathematics, Shri Mata Vaishno Devi
		University,\end{center}\begin{center}Katra-$182320$, Jammu and
		Kashmir, India.
	\end{center}
	
	\begin{center}
		sohan1993sharma@gmail.com, vijaykumarbhat2000@yahoo.com, sahilsharma2634@gmail.com
	\end{center}
\begin{abstract}
	Carbon nanotube Y-junctions are of great interest to the next generation of innovative multi-terminal nanodevices. Topological indices are graph-theoretically based parameters that describe various structural properties of a chemical molecule. The entropy of a graph is a topological descriptor that serves to characterize the complexity of the underlying molecular graph. The concept of entropy is a physical property of a thermodynamic system. Graph entropies are the essential thermophysical quantities defined for various graph invariants and are applied to measure the heterogeneity and relative stabilities of molecules. In this paper, several neighborhood degree sum-based topological indices including graph-based entropies of carbon nanotube Y-junction graphs are computed.
\end{abstract}
\textbf{Keywords:} Armchair carbon nanotube, graph entropy, NM-polynomial, topological indices, Y-junction graph. \\

\noindent\textbf{MSC (2020):} 05C10, 05C35, 05C90
\section{Introduction}
 Nanotechnology is currently popular because of its evolving, electron transfer property and low-cost implementation. Nanotubes \cite{53}, were discovered in $1985$ and carbon nanotubes \cite{54} in $1991$. In nanoscience and technology, branched or non-straight carbon nanotubes such as L, T, X, and Y have a lot of applications in electronic devices, such as three-terminal transistors, multi-terminal nanoelectronics, switches, amplifiers, etc., \cite{5,6,7,8,9,10}. These junctions are a great option for the production of nanoscale electronic devices with better switching and reliable transport properties at room temperature. For more applications of carbon nanotube Y-junctions, we refer to \cite{c,d,e}.\\
 
  The first proposed branched carbon nanotube was of Y shape, commonly known as Y-junction or three-terminal junction. These junctions are classified as an armchair, zig-zag, or chiral depending on the chirality of connected carbon nanotubes. Also, they can be single-walled or multi-walled, symmetric or asymmetric, capped or uncapped. A carbon nanotube is called uncapped if both ends are open. A Y-junction is called symmetric if the nanotubes joining in the Y shape are identical, heptagons appeared isolated, and are distributed symmetrically. For various symmetric and asymmetric carbon nanotube Y-junctions, we refer to \cite{19,20,21,22}.\\
  
   A carbon nanotube Y-junction is formed by joining three identical carbon nanotubes in a Y-shaped pattern. These junctions contain exactly six hexagons as well as heptagons at the branching points. The first structural model of symmetrical single-walled armchair carbon nanotube Y-junctions was proposed by Chernozatonskii \cite{13} and Scuseria \cite{12}, independently, in $1992$. These junctions were experimentally observed \cite{14} in $1995$. For more applications and properties of carbon nanotube Y-junction graphs, we refer to \cite{15,16,17}.\\
   
    Mathematical chemistry is a branch of theoretical chemistry that employs mathematical techniques to explain the molecular structure of a chemical molecule and its physicochemical properties. Molecular graphs are a visual representation of a chemical molecule with vertices representing atoms and edges representing bonds between the atoms \cite{39}. Let $G=(V(G),E(G))$ be a molecular graph with vertex set $V(G)$ and edge set $E(G)$. The order of a molecular graph $G$ is defined as the total number of vertices in $G$, denoted by $\lvert V(G)\rvert$, and the number of edges in $G$ is called size of $G$, denoted by $\lvert E(G)\rvert$. Any edge of the graph connecting its vertices $u$ and $v$, is denoted by $e=uv\in E(G)$. Two vertices of graph $G$ are said to be adjacent if there exists an edge between them. The degree of vertex $v\in V(G)$, denoted by $d(v)$,  is defined as the number of vertices that are adjacent to vertex $v$, i.e., $d(v)$= $\lvert\{u:e=uv\in E(G)\}\rvert$. The neighborhood degree sum of vertex $v\in V(G)$ is denoted by $d_n(v)$, and is defined as the sum of the degrees of all vertices that are adjacent to $v$, i.e., $d_n(v)= \sum\limits_{u}d(v)$: $uv\in E(G)$.
The minimum cardinality of the set $K\subseteq V(G)$ such that $G\setminus K$ is disconnected graph is called connectivity or vertex-connectivity of a connected graph $G$. The connected graph $G$ is said to be $k$-connected if its connectivity is $k$. \\

 Topological indices are the numerical values calculated from molecular graphs to describe various structural properties of the chemical molecule. They are frequently used to model many physicochemical properties in various quantitative structure-property/activity relationship (QSPR/QSAR) studies \cite{h,32,36}. In $1947$, the chemist Harold Wiener \cite{24} initiated the concept of topological indices. Since then, various topological indices have been introduced, and a lot of research has been conducted toward computing the indices for different molecular graphs and networks. A topological index based on the degree of end vertices of an edge can predict various physicochemical properties of the molecule, such as heat of formation, strain energy, entropy, enthalpy, boiling points, flash point, etc., without using any weight lab \cite{32}.\\
 
  The Zagreb indices and their variations have been used to investigate molecular complexity, ZE-isomerism, and chirality \cite{71}. In general, the Zagreb indices have shown applicability for deriving multilinear regression models. Ghorbani and Hosseinzadeh \cite{25} introduced the third version of the Zagreb index and shows that this index shows a good correlation with acentric factor and entropy of the octane isomers. Mondal et al. \cite{26} introduced neighborhood degree sum-based topological indices namely neighborhood version of forgotten topological index and neighborhood version of second modified Zagreb index and discuss some mathematical properties and degeneracy of these novel indices. For more neighborhood degree sum-based topological indices, their properties, and applications, we refer to \cite{32,29,37}.\\

The process of computing the topological indices of a molecular graph from their definitions is complex and time-consuming. Thus, for a particular family of graphs and networks, algebraic polynomials play an important role in reducing the computational time and complexity when computing its topological indices. In short, with the help of algebraic polynomials, one can easily compute various kinds of graph indices within a short span of time. The NM-polynomial plays vital role in the computation of neighborhood degree sum-based topological indices. Let $d_n(v)$ denotes the neighborhood degrees sum of vertex $v\in V(G)$. Then, the neighborhood M-polynomial (NM-polynomial) of $G$ is defined as \cite{29,27,28}
\begin{eqnarray}
	NM(G;x,y)=\sum\limits_{i\leq j}\lvert E_{ij}(G)\rvert x^iy^j
\end{eqnarray}
where, $\lvert E_{ij}(G)\rvert$, $i,j\geq1$, be the number of all edges $e=uv\in E(G)$ such that $\{d_n(u)=i, d_n(v)=j\}$.\\

Recently, various neighborhood degree sum-based topological indices have been computed via the NM-polynomial technique. For example, Mondal et al. \cite{29,30} obtained some neighborhood and multiplicative neighborhood degree sum-based indices of molecular graphs by using their NM-polynomials. Kirmani et al. \cite{32} and Mondal et al. \cite{31}, investigated some neighborhood degree sum-based topological indices of antiviral drugs used for the treatment of COVID-$19$ via the NM-polynomial technique. Shanmukha et al. \cite{33} computed the topological indices of porous graphene via NM-polynomial method. For more neighborhood degree sum-based topological indices via NM-polynomials, we refer to \cite{32,31,i,j}.\\

 Some neighborhood degree sum-based topological indices and their derivation from NM-polynomial are given in Table $1$.\\
 
 \begin{table}[ht]
 	\centering
 	\caption{Description of some topological indices and its derivation from NM-polynomial}
 	\setlength{\tabcolsep}{9pt}
 	\renewcommand{\arraystretch}{1.8}
 	\tiny\begin{tabular}{cccc}
 		\hline Topological index & Formula&Derivation from $NM(G;x,y)$\\\hline
 		Third version of Zagreb index \cite{25}: $NM_1(G)$ & $\sum\limits_{uv\in E(G)}\big(d_n(u)+d_n(v)\big)$&$(D_x+D_y)(NM(G;x,y))\rvert_{x=y=1}$\\\\
 		Neighborhood second Zagreb index \cite{26}: $NM_2(G)$	& $\sum\limits_{uv\in E(G)}\big(d_n(u)d_n(v)\big)$&$(D_xD_y)(NM(G;x,y))\rvert_{x=y=1}$\\\\
 		Neighborhood second modified Zagreb index \cite{29}: $^{nm}M_2(G)$ & $\sum\limits_{uv\in E(G)}\big(\frac{1}{d_n(u)d_n(v)}\big)$&$(S_xS_y)(NM(G;x,y))\rvert_{x=y=1}$\\\\ 
 		Neighborhood forgotten topological index \cite{26}: $NF(G)$ & $\sum\limits_{uv\in E(G)}\big(d_n^2(u)+d_n^2(v)\big)$&$(D^2_x+D^2_y)(NM(G;x,y))\rvert_{x=y=1}$\\\\
 		Third NDe index \cite{29}: $ND_3(G)$ & $\sum\limits_{uv\in E(G)}d_n(u)d_n(v)(d_n(u)+d_n(v))$&$D_xD_y(D_x+D_y)(NM(G;x,y))\rvert_{x=y=1}$\\\\ Neighborhood general Randic index \cite{29}: $NR_\alpha(G)$ & $\sum\limits_{uv\in E(G)}d_n^{\alpha}(u) d_n^{\alpha}(v)$&$(D^{\alpha}_xD^{\alpha}_y)(NM(G;x,y))\rvert_{x=y=1}$\\\\
 		Neighborhood inverse Randic index \cite{29}: $NRR_\alpha(G)$ & $\sum\limits_{uv\in E(G)}\frac{1}{d_n^{\alpha}(u)d_n^{\alpha}(v)}$&$(S^{\alpha}_xS^{\alpha}_y)(NM(G;x,y))\rvert_{x=y=1}$\\\\  Fifth NDe index \cite{29}: $ND_5(G)$ & $\sum\limits_{uv\in E(G)}\big(\frac{d_n^2(u)+d_n^2(v)}{d_n(u)d_n(v)}\big)$&$(D_xS_y+S_xD_y)(NM(G;x,y))\rvert_{x=y=1}$\\\\Neighborhood harmonic index \cite{29}: $NH(G)$ &$\sum\limits_{ab\in E(G)}\frac{2}{d_n(u)+d_n(v)}$&$2S_xT(NM(G;x,y))\rvert_{x=y=1}$\\\\  Neighborhood inverse sum indeg index \cite{29}: $NI(G)$ & $\sum\limits_{uv\in E(G)}\big(\frac{d_n(u)d_n(v)}{d_n(u)+d_n(v)}\big)$&$(S_xTD_xD_y)(NM(G;x,y))\rvert_{x=y=1}$\\
 		\hline
 	\end{tabular}
 	where,
 	\textbf{$D_x=x\big(\frac{(\partial( NM(G;x,y))}{\partial x}\big)$, $D_y=y\big(\frac{(\partial (NM(G;x,y))}{\partial y}\big)$, $S_x=\int_{0}^{x}\frac{NM(G;t,y)}{t}dt$, $S_y=\int_{0}^{y}\frac{NM(G;x,t)}{t}dt$,  $T(NM(G;x,y))=NM(G;x,x)$}.
 \end{table}
In chemical graph theory, the determination of the structural information content \cite{41} of a graph is mostly based on the vertex partition of a graph to obtain a probability distribution of its vertex set \cite{1}. Based on such a probability distribution, the entropy of a graph can be defined. Thus, the structural information content of a graph is defined as the entropy of the underlying graph topology. The concept of graph entropy or entropy of graph was first time appeared in \cite{38}, where molecular graphs are used to study the information content of an organism. Entropy-based methods are powerful tools to investigate various problems in cybernetics, mathematical chemistry, pattern recognition, and computational physics \cite{39,41,40,60,62}.\\

  Entropy is a measure of randomness, uncertainty, heterogeneity, or lack of information in a system. Based on information indices, there are various approaches to deriving graph entropy from the topological structure of a given chemical molecule \cite{42}. For example, Trucco \cite{41} and 
 Rashevsky \cite{38} defined graph entropies in terms of degree of vertex, extended degree sequences, and number of vertices of a molecular graph. Tan and Wu \cite{43} study network heterogeneity by using vertex-degree based entropies. Mowshowitz defined the entropy of a graph in terms of equivalence relations defined on the vertex set of a graph and
 discussed some properties related to structural information \cite{49,50,51,52}.\\

 Recently, Shabbir and Nadeem \cite{44} defined graph entropies in terms of topological indices for the molecular graphs of carbon nanotube Y-junctions and developed the regression models between the graph entropies and topological indices. Nadeem et al. \cite{45} calculated some degree-based topological indices for armchair carbon semicapped and capped nanotubes and investigated their chemical and physical properties. Bača et al. \cite{46} computed some degree-based topological indices of a carbon nanotube network and studied its properties. Azeem et al. \cite{47} calculated some M-polynomials based topological indices of carbon nanotube Y-junctions and their variants. Ahmad \cite{48}, studied some ve-degree based topological indices of carbon nanotube Y-junctions and discussed their properties. Ayesha \cite{k} calculated the bond energy of symmetrical single-walled armchair carbon nanotube Y-junctions and developed regression models between bond energy and topological indices. Rahul et al. \cite{61} calculated some degree-based topological indices and graph-entropies of graphene, graphyne, and graphdiyne by using Shannon’s approach.\\
 
 The above-mentioned literature and applications of carbon nanotubes in the field of nanoscience and technology inspired us to develop more research on the molecular structure of carbon nanotube Y-junction and their variants. In addition, no work has been reported on NM-polynomial based topological indices and index-entropies of Y-junction graphs. Therefore, the main contribution of this study includes the following:
  
 \begin{itemize}
 	\item Computation of NM-polynomials of carbon nanotube Y-junction graphs.
 	\item Computation of some neighborhood degree sum-based topological indices from NM-polynomials.
 	\item Some graph index-entropies in terms of topological indices are defined and computed.
 	\item Comparative analysis of obtained topological indices and graph index-entropies of Y-junction graphs.
 	
 \end{itemize}
\section{Aim and Methodology} 
We use the edge partition technique, graph-theoretical tools, combinatorial computation, and the degree counting method to derive our results. The degree of end vertices is used to generate the patterns of edge partitions of the Y-junction graphs. Using such partitions, a general expression of NM-polynomials is derived. Then, several neighborhood degree sum-based topological indices are obtained from the expression of these NM-polynomials with the help of Table $1$. Also, graph index- entropies in terms of topological indices have been defined by using edge-weight functions and computed for Y-junction graphs.\\

The paper is structured as follows: In Section $3$, we define topological index-based graph entropies. The Y-junction graphs and their constructions are described in Section $4$. In Section $5$, the general expression of the NM-polynomials and neighborhood degree sum-based topological indices of Y-junction graphs are presented. Section $6$ describes the graph index-entropies of Y-junction graphs. The numerical analysis of the findings is discussed in Section $7$. Finally, the conclusion is drawn and discussed in Section $8$.   
\section{Definitions and Preliminaries}
	In this section, we define graph index-entropies in terms of an edge-weight function. In $2008$, Dehmer \cite{1} defined the entropy for a connected graph $G$ as follows: 
\begin{defn} \cite{1}
	Let $G=(V(G),E(G))$ be a connected graph of order $n$ and $g$ be an arbitrary information functional. Then the entropy of $G$ is defined as
	\begin{equation}\label{1}
		H_g(G)=-\sum\limits_{i=1}^n\frac{g(v_i)}{\sum\limits_{i=1}^ng(v_i)}log\biggl(\frac{g(v_i)}{\sum\limits_{i=1}^ng(v_i)}\biggr).
	\end{equation}
\end{defn}
Since an information function defined on the vertex set of a graph is an arbitrary function. Hence, Dehmer's definition shows the possibility of producing various graph entropies for a variation in the selection of information functionals. For such graph entropy, we can refer to \cite{2,3,4}.\\

Let $\beta:E(G)\to \mathbb{R}^+\cup\{0\}$ be an edge-weight function and $d_n(u)=\sum\limits_{uv\in E(G)}d(u)$, denotes the sum of degrees of end vertices of an edges incident to vertex $u\in V(G)$ (also known as neighborhood degree-sum of vertex $u$). Then, for eight different edge-weight functions, the third-version of Zagreb index, neighborhood second Zagreb index, neighborhood forgotten topological index, neighborhood second modified Zagreb index, third NDe index, fifth NDe index, neighborhood harmonic index and neighborhood inverse sum indeg index-entropies have been defined in the following manner:
\begin{itemize}
	\item \textbf{Third-version of Zagreb index-entropy:} If $e=uv$ is an edge of a connected graph $G$ and $\beta_1(e)=d_n(u)+d_n(v)$ is an edge-weight function defined on $E(G)$. Then, the third-version of Zagreb index is
	 \begin{equation}
		 NM_1(G)=\sum\limits_{e=uv\in E(G)}\beta_1(e)=\sum\limits_{e=uv\in E(G)}d_n(u)+d_n(v).
		\end{equation}
	Equation (\ref{1}) for this edge-weight function gives us
	\begin{eqnarray*}
		H_{\beta_1}(G)&=&-\sum\limits_{e\in E(G)}\frac{\beta_1(e)}{\sum\limits_{e\in E(G)}\beta_1(e)}log\bigg(\frac{\beta_1(e)}{\sum\limits_{e\in E(G)}\beta_1(e)}\bigg)\\
		&=&-\frac{1}{\sum\limits_{e\in E(G)}\beta_1(e)}\sum\limits_{e\in E(G)}\beta_1(e)\bigg(log(\beta_1(e))-log\sum\limits_{e\in E(G)}\beta_1(e)\bigg)
	\end{eqnarray*}
\begin{eqnarray*}
		&=&-\frac{1}{\sum\limits_{e\in E(G)}\beta_1(e)}\sum\limits_{e\in E(G)}\beta_1(e)log(\beta_1(e))+\frac{1}{\sum\limits_{e\in E(G)}\beta_1(e)}\sum\limits_{e\in E(G)}\beta_1(e)log\bigg(\sum\limits_{e\in E(G)}\beta_1(e)\bigg)\\
		&=&log\bigg(\sum\limits_{e\in E(G)}\beta_1(e)\bigg)-\frac{1}{\sum\limits_{e\in E(G)}\beta_1(e)}\sum\limits_{e\in E(G)}\beta_1(e)log(\beta_1(e)).
	\end{eqnarray*}
On replacing $\sum\limits_{e\in E(G)}\beta_1(e)$ by $NM_1(G)$ in the above equation, we get the following third-version of Zagreb index-entropy
\begin{equation}
	H_{\beta_1}(G)=log(NM_1(G))-\frac{1}{NM_1(G)}\sum\limits_{e\in E(G)}{\beta}_1(e)log\beta_1(e)
.\end{equation} 
Similarly, we define other graph index-entropies as follows:
\item\textbf{Neighborhood second Zagreb index-entropy:} For $\beta_2(e)=d_n(u)d_n(v)$, the neighborhood second Zagreb index and neighborhood second Zagreb index-entropy are
\begin{equation}
	NM_2(G)=\sum\limits_{e=uv\in E(G)}d_n(u)d_n(v),
\end{equation} 
and
\begin{equation}
	H_{\beta_2}(G)=log(NM_2(G))-\frac{1}{NM_2(G)}\sum\limits_{e\in E(G)}{\beta}_2 (e)log\beta_2(e).
\end{equation}
\item\textbf{Neighborhood forgotten topological index-entropy:} For $\beta_3(e)=d_n^2(u)+d_n^2(v)$, the neighborhood forgotten topological index and neighborhood forgotten topological index-entropy are
\begin{equation}
	NF(G)=\sum\limits_{e=uv\in E(G)}d_n^2(u)+d_n^2(v),
\end{equation} 
and
\begin{equation}
	H_{\beta_3}(G)=log(NF(G))-\frac{1}{NF(G)}\sum\limits_{e\in E(G)}{\beta}_3 (e)log\beta_3(e).
\end{equation}
\item\textbf{Neighborhood second modified Zagreb index-entropy:} For $\beta_4(e)=\frac{1}{d_n(u)d_n(v)}$, the neighborhood second modified Zagreb index and neighborhood second modified Zagreb index-entropy are
\begin{equation}
	^{nm}M_2(G)=\sum\limits_{e=uv\in E(G)}\frac{1}{d_n(u)d_n(v)},
\end{equation} 
and
\begin{equation}
	H_{\beta_4}(G)=log(^{nm}M_2(G))-\frac{1}{^{nm}M_2(G)}\sum\limits_{e\in E(G)}{\beta}_4 (e)log\beta_4(e).
\end{equation}
\item\textbf{Third NDe index-entropy:} For $\beta_5(e)=d_n(u)d_n(v)\big(d_n(u)+d_n(v)\big)$, the third NDe index and third NDe index-entropy are
\begin{equation}
	ND_3(G)=\sum\limits_{e=uv\in E(G)}d_n(u)d_n(v)\big(d_n(u)+d_n(v)\big),
\end{equation} 
and
\begin{equation}
	H_{\beta_5}(G)=log(ND_3(G))-\frac{1}{ND_3(G)}\sum\limits_{e\in E(G)}{\beta}_5 (e)log\beta_5(e).
\end{equation}
\item\textbf{Fifth NDe index-entropy:} For $\beta_6(e)=\frac{d_n(u)}{d_n(v)}+\frac{d_n(v)}{d_n(u)}$, the fifth NDe index and fifth NDe index-entropy are
\begin{equation}
	ND_5(G)=\sum\limits_{e=uv\in E(G)}\frac{d_n(u)}{d_n(v)}+\frac{d_n(v)}{d_n(u)},
\end{equation} 
and
\begin{equation}
	H_{\beta_6}(G)=log(ND_5(G))-\frac{1}{ND_5(G)}\sum\limits_{e\in E(G)}{\beta}_6 (e)log\beta_6(e).
\end{equation}
\item\textbf{Neighborhood harmonic index-entropy:} For $\beta_7(e)=\frac{2}{d_n(u)+d_n(v)}$, the neighborhood harmonic index and neighborhood harmonic index-entropy are
\begin{equation}
	NH(G)=\sum\limits_{e=uv\in E(G)}\frac{2}{d_n(u)+d_n(v)},
\end{equation} 
and
\begin{equation}
	H_{\beta_7}(G)=log(NH(G))-\frac{1}{NH(G)}\sum\limits_{e\in E(G)}{\beta}_7 (e)log\beta_7(e).
\end{equation}
\item\textbf{Neighborhood inverse sum indeg index-entropy:} For $\beta_8(e)=\frac{d_n(u)d_n(v)}{d_n(u)+d_n(v)}$, the neighborhood inverse sum index and neighborhood inverse sum index-entropy are
\begin{equation}
	NI(G)=\sum\limits_{e=uv\in E(G)}\frac{d_n(u)d_n(v)}{d_n(u)+d_n(v)},
\end{equation} 
and
\begin{equation}
	H_{\beta_8}(G)=log(NI(G))-\frac{1}{NI(G)}\sum\limits_{e\in E(G)}{\beta}_8 (e)log\beta_8(e).
\end{equation}
 \end{itemize}
\section{Y-Junction Graphs}
 The Y-junctions examined in this study are created by the covalent connection of three identical single-walled carbon nanotubes crossing at an angle of $120^{\circ}$ and are uniquely determined by their chiral vector $v=nv_1+nv_2$, where $v_1$ and $v_2$ are graphene sheet lattice vectors and $n$ is non-negative integer. Let $m\geq 1$ and $n\geq 4$ be an even integer. Then, an uncapped symmetrical single-walled carbon nanotube Y-junction is made up of an armchair $Y(n,n)$ and three identical single-walled armchair carbon nanotubes $T_m(n,n)$ each of length $m$ (layers of hexogones), denoted by $Y^m(n,n)$. In $Y^m(n,n)$, we have $\frac{3}{4}n^2-\frac{3}{2}n+5$ faces including three openings (where the tubes meet to the amchair) each of chirality $(n, n)$, six heptagones, and $\frac{3}{4}n^2-\frac{3}{2}n-4$ hexagones. In addition, the tube $T_m(n,n)$ contains $2mn$ hexagonal faces.\\
 
 Let $n$, $m$, and $l$ be positive integers with $m\geq1$ and $n=2l$, for some $l\geq2$. Then $J=J^m(n,n)$ be the $Y$-junction graph of $Y^m(n,n)$. It has $9l^2-3l+2$ hexagonal rings along with six heptagons. The graph $J$ is of order $6l^2 + 18l + 6 + 24ml$ and size $9l^2 + 21l + 9 + 36ml$. It has $6l^2+12l+6+24ml$ vertices of degree three and $12l$ vertices of degree two. Note that graph $J$ is a $2$-conneced graph.\\
 
 Along with $2$-connected Y-junction graph $J$, the $1$-connected Y-junction graphs have also been taken into consideration. These graphs are obtained by adding pendants to the degree $2$ vertices of the $2$-connected graph $J$. Note that, each tube of $J$ has $2n$ vertices of degree $2$. Therefore, the graph $J$ has $6n$ vertices of degree $2$.\\
 
  The graph obtained by connecting $2n$ pendants to any one tube in $J$ is denoted by $J_1$, and we call it as second type Y-junction graph. The order and size of graph $J_1$ are $6l^2 +22l +6+24ml$ and $9l^2 +25l +9+36ml$, respectively. The graph $J_2$ represents a graph which is obtained by attaching $4n$ pendants to any two tubes of $J$ and we call it as third type Y-junction graph. In $J_2$, we have $6l^2+26l+6+24ml$ vertices and $9l^2+29l+9+36ml$ edges. The graph obtained by joining $6n$ pendants to all the three tubes of $J$ is denoted by $J_3$, and we called it as fourth type Y-junction graph. It has $6l^2 +30l +6+24ml$ vertices and $9l^2 +33l +9+36ml$ edges. The carbon nanotube Y-junction graphs $J$, $J_1$, $J_2$, and $J_3$ are shown in Figure $1$.\\
  
   The edge partition of Y-junction graphs $J$, $J_1$, $J_2$, and $J_3$ based on the neighborhood degree-sum of end vertices of an edge is given in Table $2$.
   \begin{figure}
   	\centering
   	\begin{subfigure}[b]{0.5\textwidth}
   		\centering
   		\includegraphics[width=\textwidth]{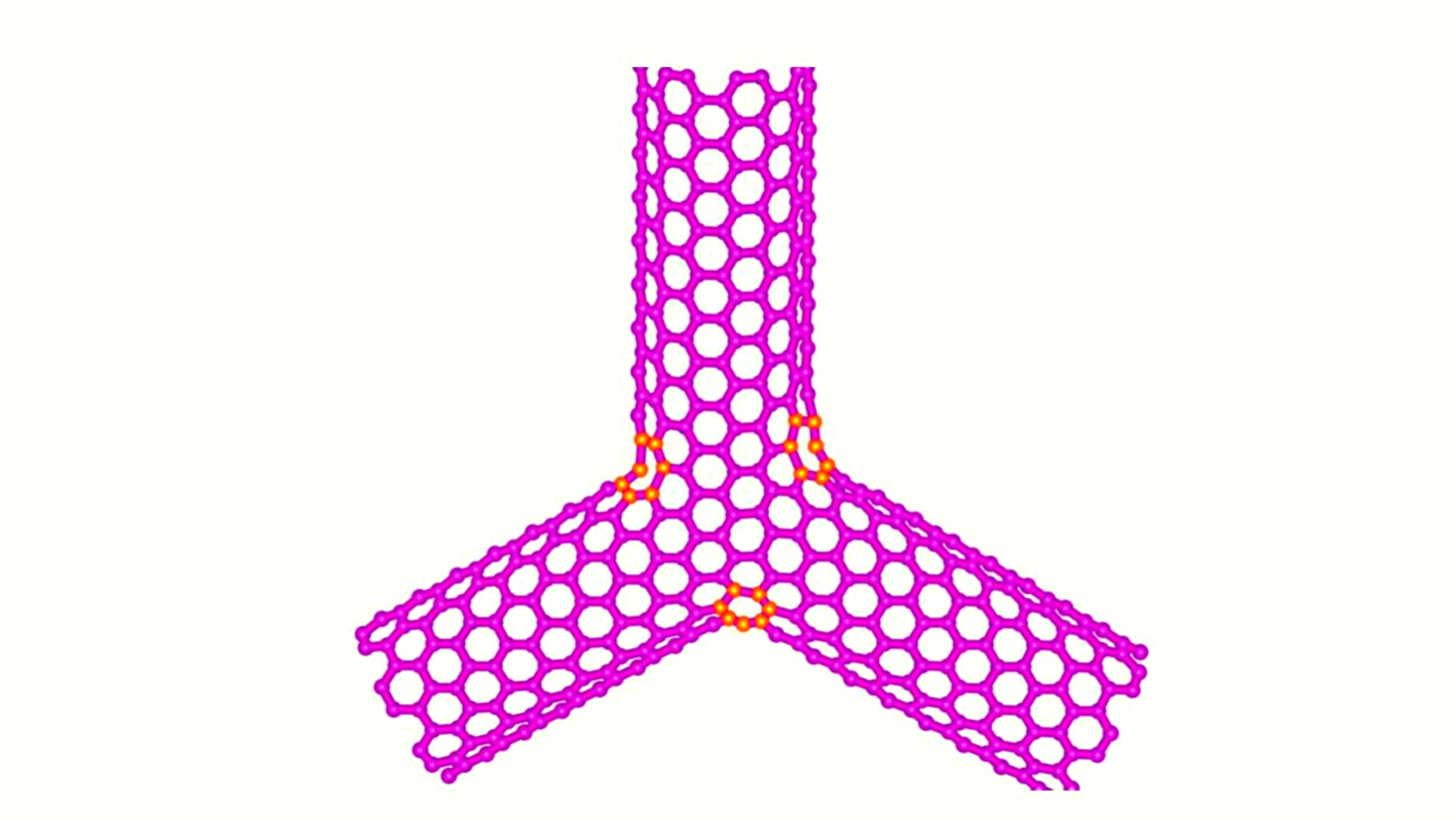}
   		\caption{Y-junction graph $J$ }
   	\end{subfigure}
   	\begin{subfigure}[b]{0.35\textwidth}
   		\centering
   		\includegraphics[width=\textwidth]{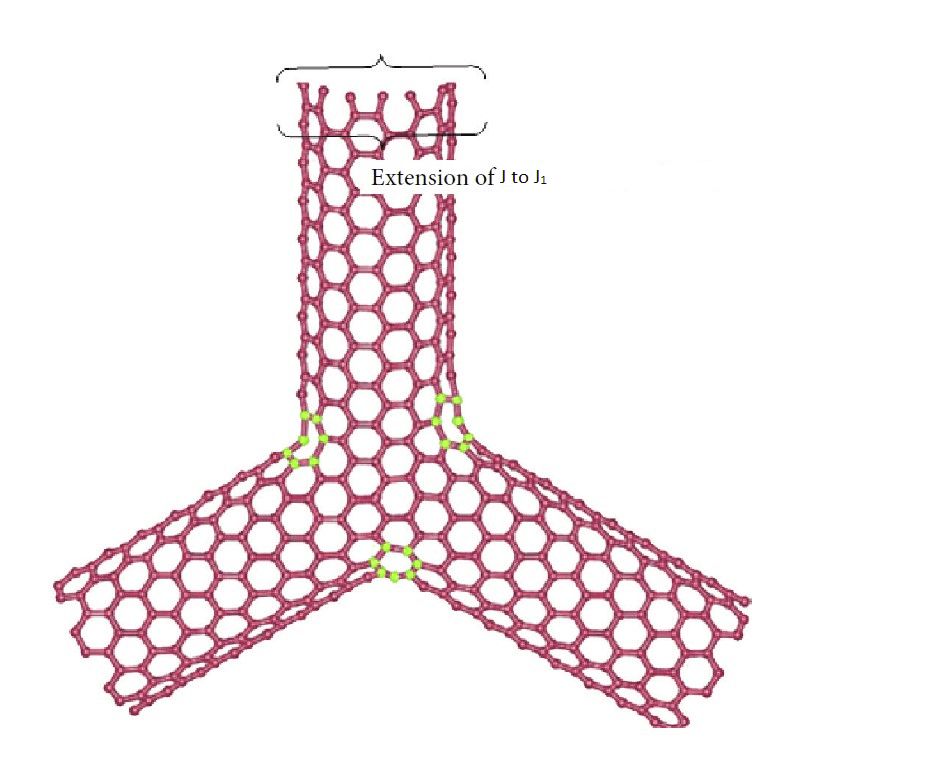}
   		\caption{Y-junction graph $J_1$ }
   	\end{subfigure}
   	\begin{subfigure}[b]{0.3\textwidth}
   		\centering
   		\includegraphics[width=\textwidth]{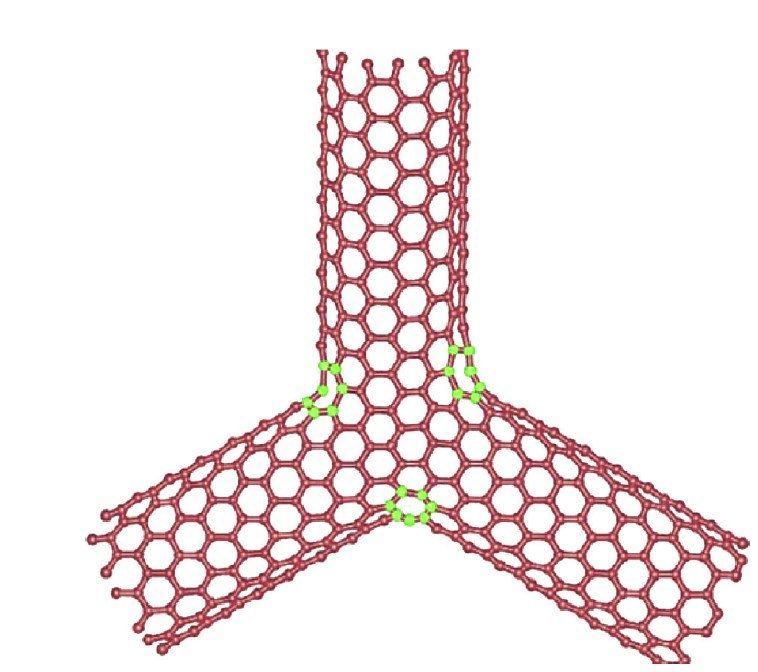}
   		\caption {Y-junction graph $J_2$ }
   	\end{subfigure}
   	\hspace{2cm}
   	\begin{subfigure}[b]{0.29\textwidth}
   		\centering
   		\includegraphics[width=\textwidth]{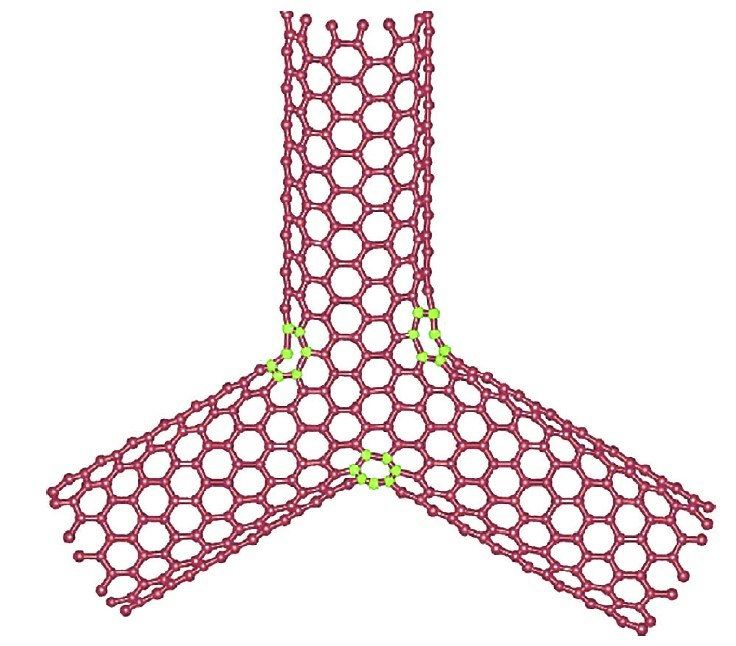}
   		\caption{Y-junction graph $J_3$ }
   	\end{subfigure}
   	\caption{A symmetrical uncapped
   		single-walled armchair carbon
   		nanotubes Y-junction graphs}
   	\label{fig2}
   \end{figure}
\begin{table}[H]
	\centering
	\caption{Edge partitions of $J$, $J_1$, $J_2$, and $J_3$}
	\setlength{\tabcolsep}{8pt}
	\renewcommand{\arraystretch}{2.0}
	\begin{tabular}{l cccc}
		\hline
		$d_n(u),d_n(v)$ & $J$-frequency & $J_1$-frequency&$J_2$-frequency&$J_3$-frequency\\
		\hline
		(3,7) & 0 & 4l &8l &12l\\
		(5,5) & 6l & 4l &2l &0\\
		(5,8) & 12l & 8l &4l &0\\
		(7,7) & 0 & 2l &4l &6l\\
		(7,9) & 0 & 4l &8l &12l\\
		(8,8) & 6l & 4l &2l &0\\
		(8,9) & 12l & 8l &4l &0\\
		(9,9) & $9l^2-15l+36ml+9$ & $9l^2-9l+36ml+9$ &$9l^2-3l+36ml+9$ &$9l^2+3l+36ml+9$\\
		\hline
		
	\end{tabular}
\end{table}
\section{NM-Polynomials and Topological Indices of Y-Junction Graphs}
In this section, we develop the general expression of NM-polynomials for the Y-junction graphs and then recover various neighborhood degree-sum based topological indices from these polynomials.  

\begin{thm}
	Let $J$ be the Y-junction graph of an uncapped symmetrical single-walled armchair carbon nanotube. Then\\
	$NM(J;x,y)=6lx^5y^5+12lx^5y^8+6lx^8y^8+12lx^8y^9+(9l^2-15l+9+36ml)x^9y^9$.
\end{thm}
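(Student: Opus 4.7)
The strategy is to apply the definition $NM(J;x,y)=\sum_{i\leq j}|E_{ij}(J)|x^iy^j$ directly, so the task reduces to justifying the $J$-column of Table $2$, namely the classification of the edges of $J$ by the unordered pair $(d_n(u),d_n(v))$ of neighborhood degree sums at their endpoints. First I would record that every vertex of $J$ has degree $2$ or $3$: there are exactly $12l$ degree-$2$ vertices, which sit in three groups of $4l$ on the three open boundary rings at the tube mouths, and $6l^2+12l+6+24ml$ degree-$3$ vertices (as stated in Section $4$). Consequently, a priori, $d_n(v)\in\{4,5,6\}$ when $\deg v=2$ and $d_n(v)\in\{6,7,8,9\}$ when $\deg v=3$.

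Next I would carry out a local inspection of the three open tube mouths and the branching region to determine exactly which of these values actually occur. Each degree-$2$ boundary vertex is adjacent to one degree-$2$ vertex along its boundary cycle and to one degree-$3$ vertex inside the tube, giving $d_n=5$. A degree-$3$ vertex adjacent to a degree-$2$ vertex has its remaining two neighbors lying inside the tube, both of degree $3$, giving $d_n=8$. Every other degree-$3$ vertex has three degree-$3$ neighbors, so $d_n=9$; the key point here is verifying, from the explicit construction of $Y^m(n,n)$, that no heptagon vertex or branching vertex is adjacent to a boundary degree-$2$ vertex, so the values $6$ and $7$ do not appear.

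With the neighborhood degree-sum values localized, the edge counts fall out by structural enumeration. The $(5,5)$-edges are the edges lying on the three boundary rings; each boundary ring (of length $n=2l$) contributes half its edges to this class, giving $6l$ in total. Each degree-$2$ vertex contributes exactly one $(5,8)$-edge (its unique inward degree-$3$ neighbor), yielding $12l$. The $(8,8)$-edges and $(8,9)$-edges are counted by tracking the $12l$ degree-$3$ vertices with $d_n=8$ and their edges into the first hexagonal layer, giving $6l$ and $12l$ respectively. Finally, the $(9,9)$-count is obtained by subtracting the previous four counts from the known size $|E(J)|=9l^2+21l+9+36ml$, producing $9l^2-15l+9+36ml$.

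The main obstacle is the local verification in the second step, namely ruling out the spurious values $d_n\in\{4,6,7\}$; this is ultimately a case analysis on the neighborhoods of vertices incident to a boundary ring, to a heptagon, or to the armchair junction $Y(n,n)$. Once the five frequencies $6l,\ 12l,\ 6l,\ 12l,\ 9l^2-15l+9+36ml$ associated with the pairs $(5,5),(5,8),(8,8),(8,9),(9,9)$ are established, substitution into the NM-polynomial formula yields
\[
NM(J;x,y)=6lx^5y^5+12lx^5y^8+6lx^8y^8+12lx^8y^9+(9l^2-15l+9+36ml)x^9y^9,
\]
completing the argument.
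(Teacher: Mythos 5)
Your proposal follows essentially the same route as the paper: both establish the edge partition of $J$ by neighborhood degree sums $(5,5),(5,8),(8,8),(8,9),(9,9)$ with frequencies $6l,12l,6l,12l,9l^2-15l+9+36ml$ (Table $2$) and then substitute into the definition $NM(J;x,y)=\sum_{i\leq j}\lvert E_{ij}\rvert x^iy^j$. In fact you supply more of the structural justification (localizing $d_n$ at the tube mouths and obtaining the $(9,9)$ count by subtraction from $\lvert E(J)\rvert=9l^2+21l+9+36ml$) than the paper, which simply asserts the partition by structural analysis, so the argument is correct and compatible.
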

\begin{proof}
	The Y-junction graph of an uncapped symmetrical single-walled armchair carbon nanotubes has $9l^2+21l+9+36ml$ number of edges. Let $E_{(i,j)}$ be the set of all edges with neighborhood degree sum of end vertices $i,j$, i.e.,
	$E_{(i,j)}=\{uv\in E(J):d_n(u)=i,\; d_n(v)=j\}$.\\
	By means of structural analysis of $J$, the edge set of $J$ can be partitioned into five sets on the basis of neighborhood degree sum of end vertices as follows:\\
	$E_{(5,5)}=\{uv\in E(J):d_n(u)=5,\; d_n(v)=5\}$, $E_{(5,8)}=\{uv\in E(J):d_n(u)=5,\; d_n(v)=8\}$, $E_{(8,8)}=\{uv\in E(J^m(n,n)):d_n(u)=8,\; d_n(v)=8\}$, $E_{(8,9)}=\{uv\in E(J):d_n(u)=8,\; d_n(v)=9\}$, $E_{(9,9)}=\{uv\in E(J):d_n(u)=9,\; d_n(v)=9\}$, and $\lvert E_{(5,5)} \rvert=6l$, $\lvert E_{(5,8)} \rvert=12l$, $\lvert E_{(8,8)} \rvert=6l$, $\lvert E_{(8,9)} \rvert=12l$, $\lvert E_{(9,9)} \rvert=9l^2-15l+9+36ml$.\\
	
	From Equation (1), the NM-polynomial of $J$ is obtained as follows:
	\begin{eqnarray*}
		NM(J;x,y)&=&\sum\limits_{i\leq j}\lvert E_{(i,j)}\rvert x^iy^j\\
		&=&\lvert E_{(5,5)}\rvert x^5y^5+\lvert E_{(5,8)}\rvert x^5y^8+\lvert E_{(8,8)}\rvert x^8y^8+\lvert E_{(8,9)}\rvert x^8y^9+\lvert E_{(9,9)}\rvert x^9y^9\\
		&=&6lx^5y^5+12lx^5y^8+6lx^8y^8+12lx^8y^9+(9l^2-15l+9+36ml)x^9y^9.
	\end{eqnarray*}
\end{proof}
\begin{thm}
	Let $J$ be the Y-junction graph of an uncapped symmetrical single-walled armchair carbon nanotube . Then\\\\(i) $NM_1(J)=162l^2+246l+648ml+162$\\\\
	(ii) $NM_2(J)=729l^2+663l+2916ml+729$\\\\
	(iii) $NF(J)=1458l^2+1446l+5832ml+1458$\\\\
	(iv) $^{nm}M_2(J)=0.11l^2+0.62l+0.44ml+0.11$\\\\
	(v) $NR_{\alpha}(J)=6l(25^{\alpha}+2(40)^{\alpha}+64^{\alpha}+2(72)^{\alpha})+81^{\alpha}(9l^2-15l+9+36ml)$\\\\
	(vi) $ND_3(J)=13122l^2+6702l+52488ml+13122$\\\\
	(vii) $ND_5(J)=18l^2+44.86l+72ml+18$\\\\
	(viii) $NH(J)=l^2+9.69l+4ml+1 $\\\\
	(ix) $NI(J)=40.5l^2+59.24l+162ml+40.5$\\\\
	(x) $S(J)=1167.7l^2+714.23l+4670.9ml+1167.7$.
\end{thm}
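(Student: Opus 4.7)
The plan is simple: substitute the NM-polynomial of $J$ delivered by the preceding theorem,
\[
NM(J;x,y)=6lx^5y^5+12lx^5y^8+6lx^8y^8+12lx^8y^9+(9l^2-15l+9+36ml)x^9y^9,
\]
into the ten operator recipes of Table $1$ and evaluate at $x=y=1$. Since $NM(J;x,y)$ is a finite sum $\sum c_{ij}x^iy^j$ over the five edge classes from Table $2$, every operator simply rescales these five monomials; the whole theorem then collapses to bookkeeping on the pairs $(i,j)\in\{(5,5),(5,8),(8,8),(8,9),(9,9)\}$ weighted by $c_{ij}\in\{6l,12l,6l,12l,9l^2-15l+9+36ml\}$.

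First I would dispatch the parts requiring only the derivative operators. Using $D_x(x^iy^j)=ix^iy^j$ and $D_y(x^iy^j)=jx^iy^j$, parts (i), (ii), (iii), (v), and (vi) amount to tabulating the per-class weights $i+j$, $ij$, $i^2+j^2$, $(ij)^\alpha$, and $ij(i+j)$, multiplying by the respective $c_{ij}$, and collecting powers of $l$ and $m$. For example, part (i) uses the weights $10,13,16,17,18$, producing $60l+156l+96l+204l+18(9l^2-15l+9+36ml)$, which collapses to $162l^2+246l+648ml+162$; parts (ii), (iii), and (vi) are analogous integer sums, while part (v) is recorded verbatim because $\alpha$ is a free parameter.

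The parts containing $S_x$, $S_y$, or the diagonalising map $T$ (namely (iv), (vii), (viii), (ix), and the $S(J)$ of (x)) go through identically after recording $S_x(x^iy^j)=\frac{1}{i}x^iy^j$, $S_y(x^iy^j)=\frac{1}{j}x^iy^j$, and $T(x^iy^j)=x^{i+j}$. This gives per-class weights $\frac{1}{ij}$ for ${}^{nm}M_2$, $\frac{i}{j}+\frac{j}{i}$ for $ND_5$, $\frac{2}{i+j}$ for $NH$, and $\frac{ij}{i+j}$ for $NI$; part (x) is computed from its corresponding edge-weighted sum by the same template. The only substantive obstacle is accurate rational-to-decimal arithmetic, as contributions such as $12l\cdot\frac{2}{13}$, $12l\cdot\frac{2}{17}$, and $12l\cdot\frac{72}{17}$ produce non-terminating decimals that must be rounded consistently to match the two-decimal precision in the statement. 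A useful sanity check before running any weighted sum is to set all weights to $1$: the total must recover $|E(J)|=9l^2+21l+9+36ml$, which validates the edge partition of Table $2$ on which the entire calculation rests.
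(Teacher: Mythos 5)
Your proposal is correct and follows essentially the same route as the paper: substitute the NM-polynomial of $J$ into the operator formulas of Table 1, act monomial-by-monomial via $D_x$, $D_y$, $S_x$, $S_y$, $T$, evaluate at $x=y=1$, and collect coefficients in $l$ and $m$ (with the rounding caveat for the decimal entries). The only soft spot is part (x), where $S(J)$ is not among the indices listed in Table 1 and the paper evaluates it through the additional operator $S_x^3Q_{-2}TD_x^3D_y^3$; your ``same template'' remark covers this only implicitly, but it coincides with what the paper actually does.
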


\begin{proof}
	Let $f(x,y)=NM(J;x,y)=6lx^5y^5+12lx^5y^8+6lx^8y^8+12lx^8y^9+(9l^2-15l+9+36ml)x^9y^9$. Then, we have\\
	\noindent$D_x(f(x,y))=30lx^5y^5+60lx^5y^8+48lx^8y^8+96lx^8y^9+9(9l^2-15l+9+36ml)x^9y^9$.\\
	
	\noindent$D_y(f(x,y))=30lx^5y^5+96lx^5y^8+48lx^8y^8+108lx^8y^9+9(9l^2-15l+9+36ml)x^9y^9$.\\
	
	\noindent$D^2_x(f(x,y))=150lx^5y^5+300lx^5y^8+384lx^8y^8+768lx^8y^9+81(9l^2-15l+9+36ml)x^9y^9$.\\

	\noindent$D^2_y(f(x,y))=150lx^5y^5+768lx^5y^8+384lx^8y^8+972lx^8y^9+81(9l^2-15l+9+36ml)x^9y^9$.\\

	\noindent$D_xD_y(f(x,y))=150lx^5y^5+480lx^5y^8+384lx^8y^8+864lx^8y^9+81(9l^2-15l+9+36ml)x^9y^9$.\\

	\noindent$(D_x+D_y)f(x,y)=60lx^5y^5+156lx^5y^8+96lx^8y^8+204lx^8y^9+18(9l^2-15l+9+36ml)x^9y^9$.
	\begin{eqnarray*}D_xD_y(D_x+D_y)f(x,y)&=&1500lx^5y^5+6240lx^5y^8+6144lx^8y^8+14688lx^8y^9+1458(9l^2-15l+\\&&9+36ml)x^9y^9.
		\end{eqnarray*}

	\noindent$(D^2_x+D^2_y)f(x,y)=300lx^5y^5+1068lx^5y^8+768lx^8y^8+1740lx^8y^9+162(9l^2-15l+9+36ml)x^9y^9$.
	\begin{eqnarray*}
	D_x^{\alpha}D_y^{\alpha}(f(x,y))&=&6l(25)^{\alpha
	}x^5y^5+12l(40)^{\alpha}x^5y^8+6l(64)^{\alpha}x^8y^8+12l(72)^{\alpha}x^8y^9+(81)^{\alpha}\\&&(9l^2-15l+9+36ml)x^9y^9.	\end{eqnarray*}
\noindent$S_xS_y(f(x,y))=\frac{6l}{25}x^5y^5+\frac{12l}{40}x^5y^8+\frac{6l}{64}x^8y^8+\frac{12l}{72}x^8y^9+\frac{(9l^2-15l+9+36ml)}{81}x^9y^9$.\\

\noindent$S_yD_x+S_xD_y(f(x,y))=12lx^5y^5+\frac{267l}{10}x^5y^8+12lx^8y^8+\frac{145l}{6}x^8y^9+2(9l^2-15l+9+36ml)x^9y^9$.\\

\noindent$2S_xT(f(x,y))=\frac{6l}{5}x^{10}+\frac{24l}{13}x^{13}+\frac{3l}{4}x^{16}+\frac{24l}{17}x^{17}+\frac{(9l^2-15l+9+36ml)}{9}x^{18}$.\\

\noindent$S_xTD_xD_y(f(x,y))=15lx^{10}+\frac{480l}{13}x^{13}+\frac{384l}{16}x^{16}+\frac{864l}{17}x^{17}+\frac{81(9l^2-15l+9+36ml)}{18}x^{18}$.\\

\noindent$S_x^3Q_{-2}TD_x^3D_y^3(f(x,y))=\frac{93750l}{512}x^8+\frac{768000l}{1331}x^{11}+\frac{1572864l}{2744}x^{14}+\frac{4478976l}{3375}x^{15}+\frac{531441(9l^2-15l+9+36ml)}{4096}x^{16}$.\\

\noindent Now, using Table $1$ we have\\

(i) $NM_1(J)=(D_x+D_y)f(x,y)\rvert_{x=y=1}=162l^2+246l+648ml+162$.\\

(ii) $NM_2(J)=(D_xD_y)f(x,y)\rvert_{x=y=1}=729l^2+663l+2916ml+729$.\\

(iii) $NF(J)=(D^2_x+D^2_y)f(x,y)\rvert_{x=y=1}=1458l^2+1446l+5832ml+1458$.\\

(iv) $^{nm}M_2(J)=(S_xS_y)f(x,y)\rvert_{x=y=1}=0.11l^2+0.62l+0.44ml+0.11$.\\

(v) $NR_{\alpha}(J)=(D^{\alpha}_xD^{\alpha}_y)f(x,y)\rvert_{x=y=1}=6l(25^{\alpha}+2(40)^{\alpha}+64^{\alpha}+2(72)^{\alpha})+81^{\alpha}(9l^2-15l+9+36ml)$.\\

(vi) $ND_3(J)=D_xD_y(D_x+D_y)f(x,y)\rvert_{x=y=1}=13122l^2+6702l+52488ml+13122$.\\

(vii) $ND_5(J)=S_yD_x+S_xD_y(f(x,y))\rvert_{x=y=1}=18l^2+44.86l+72ml+18$.\\

(viii) $NH(J)=2S_xT(f(x,y))\rvert_{x=y=1}=l^2+9.69l+4ml+1 $.\\

(ix) $NI(J)=S_xTD_xD_y(f(x,y))\rvert_{x=y=1}=40.5l^2+59.24l+162ml+40.5$.\\

(x) $S(J)=S_x^3Q_{-2}TD_x^3D_y^3(f(x,y))\rvert_{x=y=1}=1167.7l^2+714.23l+4670.9ml+1167.7$.
\end{proof}
\begin{thm}
	Let $J_1$ be the second type Y-junction graph of an uncapped symmetrical single-walled armchair carbon nanotube. Then\\
	$NM(J_1;x,y)=4lx^3y^7+4lx^5y^5+8lx^5y^8+2lx^7y^7+4lx^7y^9+4lx^8y^8+8lx^8y^9+(9l^2-9l+9+36ml)x^9y^9$.
\end{thm}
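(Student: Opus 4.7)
The plan is to follow the template of Theorem 2: partition $E(J_1)$ into classes $E_{(i,j)} = \{uv \in E(J_1) : d_n(u) = i,\, d_n(v) = j\}$, count each class by structural analysis, and assemble the NM-polynomial directly from its definition $NM(J_1;x,y) = \sum_{i \le j} |E_{(i,j)}|\, x^i y^j$.

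First I would verify the coarse bookkeeping. Passing from $J$ to $J_1$ amounts to attaching $2n = 4l$ pendants to the $2n$ rim vertices of degree $2$ on a single chosen tube. This adds $4l$ vertices of degree $1$ and $4l$ edges, consistent with $|E(J_1)| = |E(J)| + 4l = 9l^2 + 25l + 9 + 36ml$, so no edges are lost, only redistributed among classes.

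Second, I would analyze how pendant attachment perturbs the neighborhood degree sums. Each former rim vertex $v$ is promoted from degree $2$ to degree $3$, so the new edge $pv$ has $d_n(p) = d(v) = 3$ and $d_n(v) = 1 + 3 + 3 = 7$, placing it in the new class $(3,7)$. The in-tube neighbors of $v$ also see their $d_n$ values shifted by $1$, since $v$'s contribution to their neighborhood sum changes from $2$ to $3$; this cascades the $(5,\cdot)$ and $(8,\cdot)$ classes of $J$ into new $(7,\cdot)$ and $(9,\cdot)$ classes in $J_1$. Tracking these shifts around the decorated tube using the periodic structure of the armchair pattern, I would recover the cardinalities listed in Table $2$: $|E_{(3,7)}| = 4l$, $|E_{(5,5)}| = 4l$, $|E_{(5,8)}| = 8l$, $|E_{(7,7)}| = 2l$, $|E_{(7,9)}| = 4l$, $|E_{(8,8)}| = 4l$, $|E_{(8,9)}| = 8l$, and $|E_{(9,9)}| = 9l^2 - 9l + 9 + 36ml$. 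A quick sum reproduces the total size.

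Substituting these counts into the defining sum yields the claimed closed form. The main obstacle is the middle step: one must confirm that the pendant perturbation is localized to the decorated tube, so that the central junction region and the two undecorated tubes contribute exactly their $J$-counts, and one must account correctly for how many edges of each old $(5,\cdot)$ and $(8,\cdot)$ class migrate under the local degree shifts. Specifically, the proof hinges on justifying why exactly $2l$ edges become $(7,7)$, why exactly $4l$ become $(7,9)$, and why $6l$ additional edges near the decorated rim are promoted into the $(9,9)$ class so that $|E_{(9,9)}|$ grows by $6l$ relative to $J$. Once these local enumerations are pinned down using the symmetry of the armchair tube, the remaining computation is mechanical.
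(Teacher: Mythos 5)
Your proposal is correct and follows essentially the same route as the paper: partition $E(J_1)$ by the neighborhood degree sums of end vertices, obtain the cardinalities $\lvert E_{(3,7)}\rvert=4l$, $\lvert E_{(5,5)}\rvert=4l$, $\lvert E_{(5,8)}\rvert=8l$, $\lvert E_{(7,7)}\rvert=2l$, $\lvert E_{(7,9)}\rvert=4l$, $\lvert E_{(8,8)}\rvert=4l$, $\lvert E_{(8,9)}\rvert=8l$, $\lvert E_{(9,9)}\rvert=9l^2-9l+9+36ml$, and substitute into the defining sum $NM(J_1;x,y)=\sum_{i\leq j}\lvert E_{(i,j)}\rvert x^iy^j$. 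Your pendant-perturbation argument (the new $(3,7)$ edges, the shifts $(5,5)\to(7,7)$, $(5,8)\to(7,9)$, $(8,8),(8,9)\to(9,9)$ on the decorated tube, adding $6l$ to the $(9,9)$ class) is exactly the "structural analysis" the paper invokes without detail, so it supplies the justification rather than a different method.
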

\begin{proof}
	The second type Y-junction graph of an uncapped symmetrical single-walled armchair carbon nanotubes has $9l^2 +25l +9+36ml$ edges. Let $E_{(i,j)}$ be the set of all edges with neighborhood degree sum of end vertices $i,j$, i.e.,
	$E_{(i,j)}=\{uv\in E(J_1):d_n(u)=i,\; d_n(v)=j\}$.\\
	By means of structure analysis of $J_1$, the edge set of $J_1$ can be partitioned into eight sets on the basis of neighborhood degree sum of end vertices as follows:\\
	$E_{(3,7)}=\{uv\in E(J_1):d_n(u)=3,\; d_n(v)=7\}$, $E_{(5,5)}=\{uv\in E(J_1):d_n(u)=5,\; d_n(v)=5\}$, $E_{(5,8)}=\{uv\in E(J_1):d_n(u)=5,\; d_n(v)=8\}$, $E_{(7,7)}=\{uv\in E(J_1):d_n(u)=7,\; d_n(v)=7\}$, $E_{(7,9)}=\{uv\in E(J_1):d_n(u)=7,\; d_n(v)=9\}$, $E_{(8,8)}=\{uv\in E(J_1):d_n(u)=8,\; d_n(v)=8\}$, $E_{(8,9)}=\{uv\in E(J_1):d_n(u)=8,\; d_n(v)=9\}$, $E_{(9,9)}=\{uv\in E(J_1):d_n(u)=9,\; d_n(v)=9\}$, and $\lvert E_{(3,7)}\rvert=4l$, $\lvert E_{(5,5)}\rvert=4l$, $\lvert E_{(5,8)}\rvert=8l$, $\lvert E_{(7,7)}\rvert=2l$, $\lvert E_{(7,9)}\rvert=4l$, $\lvert E_{(8,8)}\rvert=4l$, $\lvert E_{(8,9)}\rvert=8l$, $\lvert E_{(9,9)}\rvert=9l^2-9l+9+36ml$. \\
	From Equation (1), the NM-polynomial of $J_1$ is obtained as follows:
	\begin{eqnarray*}
		NM(J_1;x,y)&=&\sum\limits_{i\leq j}\lvert E_{(i,j)}\rvert x^iy^j\\
		&=&\lvert E_{(3,7)}\rvert x^3y^7+\lvert E_{(5,5)}\rvert x^5y^5+\lvert E_{(5,8)}\rvert x^5y^8+\lvert E_{(7,7)}\rvert x^7y^7+\lvert E_{(7,9)}\rvert x^7y^9+\\&&\lvert E_{(8,8)}\rvert x^8y^8+\lvert E_{(8,9)}\rvert x^8y^9+\lvert E_{(9,9)}\rvert x^9y^9\\
		&=&4lx^3y^7+4lx^5y^5+8lx^5y^8+2lx^7y^7+4lx^7y^9+4lx^8y^8+8lx^8y^9+\\&&(9l^2-9l+9+36ml)x^9y^9.
	\end{eqnarray*}
\end{proof}
\begin{thm}
	Let $J_1$ be the second type Y-junction graph of an uncapped symmetrical single-walled armchair carbon nanotube. Then\\(i) $NM_1(J_1)=162l^2+314l+648ml+162$\\\\
	(ii) $NM_2(J_1)=729l^2+957l+2916ml+729$\\\\
	(iii) $NF(J_1)=1458l^2+2074l+5832ml+1458$\\\\
	(iv) $^{nm}M_2(J_1)=0.11l^2+0.72l+0.44ml+0.11$
		 \begin{eqnarray*}
	 (v) \; NR_{\alpha}(J_1)&=&2l(2(21)^\alpha+2(25)^\alpha+4(40)^\alpha+(49)^\alpha+2(63)^\alpha+2(64)^\alpha+4(72)^\alpha)+(81)^\alpha(9l^2-9l+9\\&&+36ml)\end{eqnarray*}
	(vi) $ND_3(J_1)=13122l^2+12170l+52488ml+13122$\\\\
	(vii) $ND_5(J_1)=18l^2+56.328l+72ml+18$\\\\
	(viii) $NH(J_1)=l^2+3.98l+4ml+1$\\\\
	(ix) $NI(J_1)=40.5l^2+75.15l+162ml+40.5$\\\\
	(x) $S(J_1)=1167.7l^2+1178.92l+4670.9ml+1167.7$.
\end{thm}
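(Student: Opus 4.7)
The plan is to mirror the structure of the proof of Theorem~2, substituting the NM-polynomial of $J_1$ obtained in Theorem~3 in place of that of $J$. Let $f(x,y)=NM(J_1;x,y)=4lx^3y^7+4lx^5y^5+8lx^5y^8+2lx^7y^7+4lx^7y^9+4lx^8y^8+8lx^8y^9+(9l^2-9l+9+36ml)x^9y^9$. Since each formula in the statement is the image of $f$ under one of the operator combinations listed in Table~1 evaluated at $x=y=1$, the work reduces to systematically applying those operators termwise.

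First I would compute the building-block expressions $D_x f$, $D_y f$, $D_x^2 f$, $D_y^2 f$, and $D_x D_y f$. On each monomial $c\,x^i y^j$, the operator $D_x$ multiplies the coefficient by $i$, $D_y$ multiplies by $j$, $D_x D_y$ multiplies by $ij$, and $D_x^\alpha D_y^\alpha$ multiplies by $(ij)^\alpha$. Dually, $S_x$ divides by $i$ and $S_y$ divides by $j$, so $S_x S_y$ divides by $ij$. The operator $T$ sets $y=x$, collapsing $x^i y^j$ to $x^{i+j}$, and $2S_xT$ on $c\,x^iy^j$ gives $\tfrac{2c}{i+j}x^{i+j}$. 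With these rules in hand, the coefficients for each of the eight monomial types in $f$ are transformed independently and then summed; evaluation at $x=y=1$ is immediate.

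Items (i)–(iv) and (vi) are then obtained in a single line each by reading off $(D_x+D_y)f$, $D_xD_yf$, $(D_x^2+D_y^2)f$, $S_xS_yf$, and $D_xD_y(D_x+D_y)f$ at $x=y=1$, summing the contributions from the seven small edge classes and the one quadratic-in-$l$ class. For (v), I would keep $\alpha$ symbolic and list the products $ij$ for each pair $(i,j)\in\{(3,7),(5,5),(5,8),(7,7),(7,9),(8,8),(8,9),(9,9)\}$, namely $21,25,40,49,63,64,72,81$, yielding the claimed closed form $2l\bigl(2(21)^\alpha+2(25)^\alpha+4(40)^\alpha+(49)^\alpha+2(63)^\alpha+2(64)^\alpha+4(72)^\alpha\bigr)+81^\alpha(9l^2-9l+9+36ml)$. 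For (vii), the combination $(S_yD_x+S_xD_y)f$ acts on $cx^iy^j$ as $c\bigl(\tfrac{i}{j}+\tfrac{j}{i}\bigr)x^iy^j$; for (viii) and (ix), I would first apply $D_xD_y$ (respectively leave $f$ unchanged) and then apply $S_xT$, which sends $cx^iy^j$ to $\tfrac{c}{i+j}x^{i+j}$. The quantity $S(J_1)$ in (x) is handled exactly as in Theorem~2 using the operator $S_x^3 Q_{-2}T D_x^3 D_y^3$ that appeared there.

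The only real obstacle I anticipate is arithmetic bookkeeping: with eight edge classes instead of the five in Theorem~2, the sums in (vii)–(ix), where rational numbers of the form $\tfrac{i}{j}+\tfrac{j}{i}$ and $\tfrac{2}{i+j}$ must be added with different denominators, require a careful common denominator before rounding to the two-decimal constants $56.328$, $3.98$, and $75.15$. To guard against sign or counting errors, I would cross-check two invariants: the total $\sum_{i\le j}|E_{ij}|=9l^2+25l+9+36ml$ must match the edge count of $J_1$ stated in Section~4, and the specialization $\alpha=1$ in (v) must reproduce $NM_2(J_1)$ from (ii). Once these checks pass, the remaining identities follow by direct substitution.
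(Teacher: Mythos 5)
Your proposal matches the paper's treatment: the paper's proof of this theorem is literally ``Refer to Theorem 2 for proof,'' i.e., apply the Table 1 operator calculus termwise to $NM(J_1;x,y)$ from Theorem 3, which is exactly what you describe (with the welcome extra sanity checks that the edge classes sum to $9l^2+25l+9+36ml$ and that $\alpha=1$ in (v) recovers $NM_2(J_1)$). One trivial nit: in your sentence on (viii) and (ix) the ``respectively'' is reversed---$NH$ uses $2S_xT$ alone while $NI$ uses $S_xTD_xD_y$---but your earlier operator rules show you intend the correct assignments, so this does not affect the argument.
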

\begin{proof}
	Refer to Theorem $2$ for proof.
\end{proof}
\begin{thm}
	Let $J_2$ be the third type Y-junction graph of an uncapped symmetrical single-walled armchair carbon nanotube. Then\\
	$NM(J_2;x,y)=8lx^3y^7+2lx^5y^5+4lx^5y^8+4lx^7y^7+8lx^7y^9+2lx^8y^8+4lx^8y^9+(9l^2-3l+9+36ml)x^9y^9$.
\end{thm}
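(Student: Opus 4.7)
The plan is to mirror the strategy used for Theorems $1$ and $3$: exploit the structural description of $J_2$ from Section $4$, produce a neighborhood degree-sum edge partition of $E(J_2)$, read off the cardinalities from Table $2$, and then substitute directly into the defining formula (1) for the NM-polynomial. Since $J_2$ is obtained from the $2$-connected graph $J$ by attaching $2n=4l$ pendants to each of two tubes, every previously degree-$2$ vertex on those two tubes becomes degree-$3$ (gaining a pendant neighbour), while the third tube retains its degree-$2$ vertices exactly as in $J$. This is what creates the new local neighborhood degree-sum patterns $(3,7)$, $(7,7)$, and $(7,9)$ that were absent in $J$, while the classes $(5,5),(5,8),(8,8),(8,9),(9,9)$ inherited from the unchanged tube survive with reduced multiplicities.

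First I would verify the total size $|E(J_2)|=9l^2+29l+9+36ml$ stated in Section $4$ by adding all eight frequencies from Table $2$, which yields $8l+2l+4l+4l+8l+2l+4l+(9l^2-3l+9+36ml)=9l^2+29l+9+36ml$, confirming consistency. Next I would justify the frequencies themselves by a local inspection of the molecular structure: a pendant vertex has degree $1$, its unique neighbour has degree $3$ with the other two neighbours of degree $3$, giving $d_n=7$ for that neighbour and $d_n=3$ for the pendant; this accounts for the $(3,7)$ edges, $4l$ per decorated tube, hence $8l$ in total. A similar case analysis along each tube and at the junction region recovers the remaining seven frequencies, and these are precisely the numbers already tabulated for $J_2$.

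Finally, I would substitute the eight cardinalities into
\[
NM(J_2;x,y)=\sum_{i\leq j}|E_{(i,j)}(J_2)|\,x^i y^j,
\]
which immediately produces the claimed expression. The presentation would parallel the proof of Theorem $3$ almost verbatim, only with the updated counts for the third type graph. The main obstacle is not algebraic but combinatorial: carefully justifying the local neighborhood degree-sum pattern in the transition zone between a decorated tube (with pendants) and the Y-shaped branching region, because this is where one must distinguish $(7,9)$ from $(5,8)$ edges and confirm the multiplicity $(9l^2-3l+9+36ml)$ of the dominant $(9,9)$ class. Once the edge partition is verified against Table $2$, substitution into formula (1) is routine.
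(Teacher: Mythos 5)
Your proposal is correct and follows essentially the same route as the paper: partition $E(J_2)$ into the eight neighborhood degree-sum classes of Table $2$ (justified by structural analysis of the pendant-decorated tubes) and substitute the cardinalities into the defining formula (1). The extra consistency check that the eight frequencies sum to $9l^2+29l+9+36ml$ is a reasonable addition but does not change the argument.
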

\begin{proof}
The third type Y-junction graph of an uncapped symmetrical single-walled armchair carbon nanotubes has $9l^2+29l+9+36ml$ number of edges. Let $E_{(i,j)}$ be the set of all edges with neighborhood degree sum of end vertices $i,j$, i.e.,
	$E_{(i,j)}=\{uv\in E(J_2):d_n(u)=i,\; d_n(v)=j\}$.\\
	By means of structure analysis of $J_2$, the edge set of $J_2$ can be partitioned into eight sets on the basis of neighborhood degree sum of end vertices as follows:\\
	$E_{(3,7)}=\{uv\in E(J_2):d_n(u)=3,\; d_n(v)=7\}$, $E_{(5,5)}=\{uv\in E(J_2):d_n(u)=5,\; d_n(v)=5\}$, $E_{(5,8)}=\{uv\in E(J_2^):d_n(u)=5,\; d_n(v)=8\}$, $E_{(7,7)}=\{uv\in E(J_2):d_n(u)=7,\; d_n(v)=7\}$, $E_{(7,9)}=\{uv\in E(J_2):d_n(u)=7,\; d_n(v)=9\}$, $E_{(8,8)}=\{uv\in E(J_2):d_n(u)=8,\; d_n(v)=8\}$, $E_{(8,9)}=\{uv\in E(J_2):d_n(u)=8,\; d_n(v)=9\}$, $E_{(9,9)}=\{uv\in E(J_2):d_n(u)=9,\; d_n(v)=9\}$, and $\lvert E_{(3,7)}\rvert=8l$, $\lvert E_{(5,5)}\rvert=2l$, $\lvert E_{(5,8)}\rvert=4l$, $\lvert E_{(7,7)}\rvert=4l$, $\lvert E_{(7,9)}\rvert=8l$, $\lvert E_{(8,8)}\rvert=2l$, $\lvert E_{(8,9)}\rvert=4l$, $\lvert E_{(9,9)}\rvert=9l^2-3l+9+36ml$. \\
	
	From Equation (1), the NM-polynomial of $J_2$ is obtained as follows:
	\begin{eqnarray*}
		NM(J_2;x,y)&=&\sum\limits_{i\leq j}\lvert E_{(i,j)}\rvert x^iy^j\\
		&=&\lvert E_{(3,7)}\rvert x^3y^7+\lvert E_{(5,5)}\rvert x^5y^5+\lvert E_{(5,8)}\rvert x^5y^8+\lvert E_{(7,7)}\rvert x^7y^7+\lvert E_{(7,9)}\rvert x^7y^9+\\&&\lvert E_{(8,8)}\rvert x^8y^8+\lvert E_{(8,9)}\rvert x^8y^9+\lvert E_{(9,9)}\rvert x^9y^9\\
		&=&8lx^3y^7+2lx^5y^5+4lx^5y^8+4lx^7y^7+8lx^7y^9+2lx^8y^8+4lx^8y^9+\\&&(9l^2-3l+9+36ml)x^9y^9.
	\end{eqnarray*}
\end{proof}
\begin{thm}
	Let $J_2$ be the third type Y-junction graph of an uncapped symmetrical single-walled armchair carbon nanotube. Then\\\\(i) $NM_1(J_2)=162l^2+382l+648ml+162$\\\\
	(ii) $NM_2(J_2)=729l^2+1251l+2916ml+729$\\\\
	(iii) $NF(J_2)=1458l^2+2478l+5832ml+1458$\\\\
	(iv) $^{nm}M_2(J_2)=0.11l^2+0.819l+0.44ml+0.11$\\\\
	(v) $NR_{\alpha}(J_2)=2l(4(21)^\alpha+(25)^\alpha+2(40)^\alpha+2(49)^\alpha+4(63)^\alpha+(64)^\alpha+2(72)^\alpha)+(81)^\alpha(9l^2-3l+9+36ml)$\\\\
	(vi) $ND_3(J_2)=13122l^2+17638l+52488ml+13122$\\\\
	(vii) $ND_5(J_2)=18l^2+65.56l+72ml+18$\\\\
	(viii) $NH(J_2)=l^2+4.57l+4ml+1$\\\\
	(ix) $NI(J_2)=40.5l^2+91.048l+162ml+40.5$\\\\
	(x) $S(J_2)=1167.7l^2+1643.61l+4670.9ml+1167.7$.
\end{thm}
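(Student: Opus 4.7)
The plan is to mirror the proof of Theorem~2, taking as input the eight-term NM-polynomial $f(x,y)=NM(J_2;x,y)$ supplied by Theorem~5, and recovering each of the ten indices (i)--(x) by applying the operator combination prescribed by the last column of Table~1 and evaluating at $x=y=1$.

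First I would tabulate once and for all the image of a generic monomial $c\,x^iy^j$ at $x=y=1$ under each operator: $D_x$ contributes a factor $i$, $D_y$ a factor $j$, $S_x$ a factor $1/i$, $S_y$ a factor $1/j$, and $T$ sends $x^iy^j$ to $x^{i+j}$. Composing these yields per-monomial scalar weights $i+j$, $ij$, $i^2+j^2$, $1/(ij)$, $(ij)^\alpha$, $ij(i+j)$, $i/j+j/i$, $2/(i+j)$, $ij/(i+j)$, and $(ij)^3/(i+j-2)^3$ for indices (i) through (x) respectively (the last coming from $S_x^3Q_{-2}TD_x^3D_y^3$, with $Q_{-2}$ shifting the single-variable exponent down by $2$ before the triple integration). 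Each identity then reduces to a weighted sum of these scalars over the eight pairs $(i,j)\in\{(3,7),(5,5),(5,8),(7,7),(7,9),(8,8),(8,9),(9,9)\}$ with the frequencies $8l,2l,4l,4l,8l,2l,4l,(9l^2-3l+9+36ml)$ from Theorem~5.

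A structural observation that short-circuits part of the bookkeeping: only the $(9,9)$ class carries $l^2$- and $ml$-dependence, so the coefficients of $l^2$, $ml$, and the constant term of each of (i)--(x) are forced to be $9$, $36$, and $9$ times the $(9,9)$-weight, matching the corresponding coefficients already obtained for $J$ in Theorem~2 and for $J_1$ in Theorem~4. The new information in Theorem~6 is therefore concentrated entirely in the coefficient of $l$, which combines the contributions of the three previously-absent classes $(3,7),(7,7),(7,9)$ with the rescaled multiplicities at $(5,5),(5,8),(8,8),(8,9)$ and the $-3$ from $9l^2-3l+9+36ml$ times the $(9,9)$-weight. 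This observation can be used as a consistency check at every step.

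The main obstacle is arithmetic rather than conceptual: with eight edge classes instead of five the per-operator computation roughly doubles in length, and the decimal constants $0.819$, $65.56$, $4.57$, $91.048$ together with the coefficients of $S(J_2)$ must be recovered by careful fractional bookkeeping. I would keep all contributions in exact rational form -- particularly for (vii)--(x), where $S_x$, $S_y$, $T$, and $Q_{-2}$ introduce denominators such as $\tfrac{1}{21}$, $\tfrac{1}{14}$, $\tfrac{1}{16}$ from the new small classes -- and convert to the stated decimal approximations only at the last step. No new mathematical idea is required beyond that already exhibited for $J$ in Theorem~2.
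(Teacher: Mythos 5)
Your proposal is correct and follows essentially the same route as the paper, whose proof of this theorem is literally a reference back to Theorem~2: apply the operators of Table~1 (plus $S_x^3Q_{-2}TD_x^3D_y^3$ for $S$) to the NM-polynomial of $J_2$ from Theorem~5 and evaluate at $x=y=1$, which amounts to the per-edge-class weights $i+j$, $ij$, $i^2+j^2$, $1/(ij)$, $(ij)^\alpha$, $ij(i+j)$, $i/j+j/i$, $2/(i+j)$, $ij/(i+j)$, $(ij)^3/(i+j-2)^3$ summed against the frequencies $8l,2l,4l,4l,8l,2l,4l,9l^2-3l+9+36ml$, exactly as you describe. Your observation that only the $(9,9)$ class contributes to the $l^2$, $ml$, and constant coefficients is a sound consistency check consistent with Theorems~2 and~4, and spot checks (e.g.\ the $l$-coefficients $382$ for $NM_1$ and $1251$ for $NM_2$) confirm your bookkeeping reproduces the stated values.
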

\begin{proof}
	Refer to Theorem $2$ for proof.
\end{proof}
\begin{thm}
	Let $J_3$ be the fourth type Y-junction graph of an uncapped symmetrical single-walled armchair carbon nanotube. Then\\
	$NM(J_3;x,y)=12lx^3y^7+6lx^7y^7+12lx^7y^9+(9l^2+3l+9+36ml)x^9y^9$.
\end{thm}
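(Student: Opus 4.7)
The plan is to mirror the edge-partition argument used for $J$, $J_1$, and $J_2$ in Theorems 1, 3, and 5. I would first recall from Section 4 that the fourth-type Y-junction graph $J_3$ has $|E(J_3)| = 9l^2 + 33l + 9 + 36ml$ edges, and then partition $E(J_3)$ according to the pair $(d_n(u), d_n(v))$ of neighborhood degree sums at the endpoints of each edge. By Equation (1), the NM-polynomial is determined once the frequency $|E_{(i,j)}|$ of every such pair is known, so the entire task reduces to a local degree-sum calculation around the modified regions of $J$.

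The key step is the local analysis at the $6n = 12l$ pendants of $J_3$, one attached to each degree-$2$ vertex of the three tubes of $J$. After attachment, every pendant has degree $1$, and its sole neighbor (a former rim vertex) now has degree $3$, so $d_n = 3$ at the pendant. Each former rim vertex acquires neighbors of degrees $1, 3, 3$, hence its neighborhood degree sum becomes $7$. This places all $12l$ pendant edges in $E_{(3,7)}$. Propagating the computation one step further into each tube and enumerating along the rim gives $6l$ edges of type $(7, 7)$ (the rim-to-rim bonds joining two promoted vertices) and $12l$ edges of type $(7, 9)$ (rim vertices joined to the first interior layer of hexagons). Every remaining edge lies deeper inside a tube or within the Y-core region and joins two vertices whose three neighbors are all of degree $3$, so it is of type $(9, 9)$; subtracting from the total edge count yields $9l^2 + 3l + 9 + 36ml$ such edges, in agreement with Table 2.

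With $|E_{(3,7)}| = 12l$, $|E_{(7,7)}| = 6l$, $|E_{(7,9)}| = 12l$, and $|E_{(9,9)}| = 9l^2 + 3l + 9 + 36ml$, substitution into
\[
NM(J_3;x,y) \;=\; \sum_{i \le j} |E_{(i,j)}|\, x^i y^j
\]
immediately yields the claimed expression. The main obstacle I anticipate is not the algebra but the structural bookkeeping: one must verify that no edge near the junction of the three tubes with the armchair core slips into one of the classes $(5,5)$, $(5,8)$, $(8,8)$, or $(8,9)$ that appeared in $J$, $J_1$, and $J_2$. The reason these classes all vanish in $J_3$ is precisely that every degree-$2$ vertex of $J$ has been promoted to degree $3$ by an attached pendant, so the rim of each tube now contributes only vertices with $d_n = 7$; checking this promotion uniformly across all three symmetrically attached tubes, and confirming that the heptagonal faces at the Y-core do not introduce any exceptional edge class, is the portion requiring the most care.
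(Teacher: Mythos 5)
Your proposal is correct and follows essentially the same route as the paper: partition $E(J_3)$ by the neighborhood degree sums $(d_n(u),d_n(v))$ into the classes $(3,7)$, $(7,7)$, $(7,9)$, $(9,9)$ with frequencies $12l$, $6l$, $12l$, $9l^2+3l+9+36ml$, and substitute into Equation (1). The only difference is that the paper merely asserts these frequencies ``by means of structure analysis of $J_3$'' (cf.\ Table 2), whereas you supply the local justification (pendant promotion of the rim vertices and the resulting disappearance of the $(5,5)$, $(5,8)$, $(8,8)$, $(8,9)$ classes), which is consistent with the paper's data.
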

\begin{proof}
	The fourth type Y-junction graph of an uncapped symmetrical single-walled armchair carbon nanotube has $9l^2+33l+9+36ml$ number of edges. Let $E_{(i,j)}$ be the set of all edges with neighborhood degree sum of end vertices $i,j$, i.e.,
	$E_{(i,j)}=\{uv\in E(J_3):d_n(u)=i,\; d_n(v)=j\}$.\\
	By means of structure analysis of $J_3$, the edge set of $J_3$ can be partitioned into four sets on the basis of neighborhood degree sum of end vertices as follows:\\
	$E_{(3,7)}=\{uv\in E(J_3):d_n(u)=3,\;d_n(v)=7\}$, $E_{(7,7)}=\{uv\in E(J_3):d_n(u)=7,\;d_n(v)=7\}$, $E_{(7,9)}=\{uv\in E(J_3):d_n(u)=7,\;d_n(v)=9\}$, $E_{(9,9)}=\{uv\in E(J_3):d_n(u)=9,\;d_n(v)=9\}$, and $\lvert E_{(3,7)}\rvert=12l$, $\lvert E_{(7,7)}\rvert=6l$, $\lvert E_{(7,9)}\rvert=12l$, $\lvert E_{(9,9)}\rvert=9l^2+3l+9+36ml$.\\
	
	From Equation (1), the NM-polynomial of $J_3$ is obtained as follows:
		\begin{eqnarray*}
		NM(J_3;x,y)&=&\sum\limits_{i\leq j}\lvert E_{(i,j)}\rvert x^iy^j\\
		&=&\lvert E_{(3,7)}\rvert x^3y^7+\lvert E_{(7,7)}\rvert x^7y^7+\lvert E_{(7,9)}\rvert x^7y^9+\lvert E_{(9,9)}\rvert x^9y^9\\
		&=&12lx^3y^7+6lx^7y^7+12lx^7y^9+(9l^2+3l+9+36ml)x^9y^9.
	\end{eqnarray*}
\end{proof}
\begin{thm}
	Let $J_3$ be the fourth type Y-junction graph of an uncapped symmetrical single-walled armchair carbon nanotube. Then\\\\(i) $NM_1(J_3)=162l^2+450l+648ml+162$\\\\
	(ii) $NM_2(J_3)=729l^2+1545l+2916ml+729$\\\\
	(iii) $NF(J_3)=1458l^2+3330l+5832ml+1458$\\\\
	(iv) $^{nm}M_2(J_3)=0.11l^2+0.92l+0.44ml+0.11$\\\\
	(v) $NR_{\alpha}(J_3)=6l(2(21)^\alpha+(49)^\alpha+2(63)^\alpha)+(81)^\alpha(9l^2+3l+9+36ml)$\\\\
	(vi) $ND_3(J_3)=13122l^2+23106l+52488ml+13122$\\\\
	(vii) $ND_5(J_3)=18l^2+75.90l+72ml+18$\\\\
	(viii) $NH(J_3)=l^2+5.090l+4ml+1$\\\\
	(ix) $NI(J_3)=40.5l^2+106.95l+162ml+40.5$\\\\
	(x) $S(J_3)=1167.7l^2+2085.95l+4670.9ml+1167.7$.
\end{thm}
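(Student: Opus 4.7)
The plan is to mirror the strategy used in Theorem 2 (and in Theorems 4 and 6 for the intermediate Y-junction variants), taking as the starting point the NM-polynomial of $J_3$ established in Theorem 7, namely
\[ f(x,y)=NM(J_3;x,y)=12lx^3y^7+6lx^7y^7+12lx^7y^9+(9l^2+3l+9+36ml)x^9y^9. \]
Since the NM-polynomial has only four monomials, every operator in Table $1$ acts term by term, so each index reduces to a finite sum of four contributions plus the $(9,9)$-term that carries the $l^2$ and $ml$ dependence.

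First, I would compute the basic differential and integral operators applied to $f$: namely $D_x f$, $D_y f$, $D_x D_y f$, $D_x^2 f$, $D_y^2 f$, $S_x S_y f$, $S_y D_x f + S_x D_y f$, $S_x T(f)$, and $S_x T D_x D_y(f)$, as well as the parametrized $D_x^{\alpha} D_y^{\alpha}(f)$. On each monomial $c\,x^i y^j$ these operators act by multiplication by simple numerical factors ($D_x$ multiplies by $i$, $D_y$ by $j$, $S_x$ divides by $i$, $S_y$ divides by $j$, and $T$ substitutes $y=x$), so each operator applied to $f$ can be written down immediately. Then the substitution $x=y=1$ collapses each polynomial to a numerical expression in $l$ and $m$, and consulting Table $1$ identifies this value with the corresponding index in items (i)--(x).

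Concretely, for item (i) I would form $(D_x+D_y)f\big|_{x=y=1}=12l(3+7)+6l(7+7)+12l(7+9)+18(9l^2+3l+9+36ml)$ and simplify to $162l^2+450l+648ml+162$; for (ii) I would use $D_xD_y$ and obtain $12l(21)+6l(49)+12l(63)+81(9l^2+3l+9+36ml)$; and analogously for (iii)--(ix). For (iv) and (viii), the rational terms $\tfrac{12l}{21}+\tfrac{6l}{49}+\tfrac{12l}{63}$ and $\tfrac{24l}{10}+\tfrac{12l}{14}+\tfrac{24l}{16}$ (respectively with the appropriate prefactors from $S_xS_y$ and $2S_xT$) produce the decimal constants $0.92$ and $5.090$; in (vii) the expression $\tfrac{d_n(u)}{d_n(v)}+\tfrac{d_n(v)}{d_n(u)}$ evaluated on each edge class yields the coefficient $75.90l$; for (v) the general Randi\'c index is read off directly by writing $(ij)^{\alpha}$ on each monomial. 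Item (x) is computed in the same way using the composite operator $S_x^3 Q_{-2} T D_x^3 D_y^3$ whose action on a monomial $x^i y^j$ is a deterministic rescaling.

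The proof is therefore essentially mechanical, reducing to checking that the operator algebra from Table $1$ has been applied correctly to each of the four monomials of $NM(J_3;x,y)$. The main pitfalls are not conceptual but arithmetic: keeping track of the prefactors produced by each differentiation/integration, handling the $Q_{-2}$ shift in item (x) that lowers exponents before applying $S_x^3$, and rounding the irrational rational sums consistently with the decimals quoted in the statement. Since the computation is entirely parallel to Theorem 2 and the number of monomials is smaller, I would in practice write a single unified display listing the ten computations with the same layout as in Theorem 2's proof, so that the reader can verify each line against the NM-polynomial obtained in Theorem 7.
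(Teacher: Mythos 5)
Your proposal is correct and matches the paper's approach: the paper's proof of this theorem is simply a reference back to Theorem 2, i.e., applying the Table 1 operators term by term to the NM-polynomial of $J_3$ from Theorem 7 and evaluating at $x=y=1$, exactly as you describe. Your sample evaluations (e.g., $162l^2+450l+648ml+162$ for $NM_1$ and the fractional sums giving $0.92$, $5.090$, $75.90$) agree with the stated results.
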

\begin{proof}
	Refer to Theorem $2$ for proof. 
\end{proof}
\section{Graph Index-Entropies of Y-Junction Graphs} 
In this section, we compute the index-entropy of carbon nanotube Y-junctions in terms of neighborhood degree sum-based topological indices. We first compute index-entropies of the Y-junction graph $J$ whose edge partition is given in Table $2$.
\begin{itemize}
\item \textbf{Third-version of Zagreb index-entropy of $J$}
\end{itemize}
From part (i) of Theorem $2$, we have
\begin{equation}
	 NM_1(J)=162l^2+246l+648ml+162.
 \end{equation}
Now, from Equation ($4$), the third-version of Zagreb index-entropy of $J$ is  \begin{equation}
	H_{\beta_1}(J)=log(NM_1(J))-\frac{1}{NM_1(J)}\sum\limits_{e\in E(J)}{\beta}_1(e)log\beta_1(e).
\end{equation}
Using Table $2$ and Equation ($19$) in Equation ($20$), we get the required third-version of Zagreb index-entropy of $J$ as follows: 
\begin{eqnarray*}
	H_{\beta_1}(J)&=&log(NM_1(J))-\frac{1}{NM_1(J)}\sum\limits_{e\in E(J)}{\beta}_1(e)log\beta_1(e)\\
	&=& log(162l^2+246l+648ml+162)-\frac{1}{162l^2+246l+648ml+162}\bigg(6l(10)(log10)+\\&&12l(13)(log13)+6l(16)(log16)+12l(17)(log17)+(9l^2-15l+36ml+9)(18)(log18)\bigg)\\
	&=&log(162l^2+246l+648ml+162)-\frac{1}{162l^2+246l+648ml+162}\bigg(60l(log10)+\\&&156l(log13)+96l(log16)+204l(log17)+(162l^2-270l+648ml+162)(log18)\bigg)\\
	&=&log(162l^2+246l+648ml+162)-\frac{1}{162l^2+246l+648ml+162}\bigg(60l(1)+156l(1.1139433523)\\&&+96l(1.2041199827)+204l(1.2304489214)+(162l^2-270l+648ml+162)(1.2552725051)\bigg)\\&&
	\approx log(162l^2+246l+648ml+162)-\frac{202.5l^2+261.78l+810ml+202.5}{162l^2+246l+648ml+162}.
\end{eqnarray*}
\begin{itemize}
	\item \textbf{Neighborhood second Zagreb index-entropy of $J$}\end{itemize}
From part (ii) of Theorem $2$, we have 
\begin{equation}
	NM_2(J)=729l^2+663l+2916ml+729.
\end{equation}
By using the values given in Table $2$ and Equation ($21$) in Equation ($6$), we get the required neighborhood second Zagreb index-entropy of $J$ as follows:
\begin{eqnarray*}
	H_{\beta_2}(J)&=&log(NM_2(J))-\frac{1}{NM_2(J)}\sum\limits_{e\in E(J)}{\beta}_2(e)log\beta_2(e)\\
	&=&log(729l^2+663l+2916ml+729)-\frac{1}{729l^2+663l+2916ml+729}\bigg(6l(25)(log25)+\\&&12l(40)(log40)+6l(64)(log64)+12l(72)(log72)+(9l^2-15l+36ml+9)(81)(log81)\bigg)\\&&
	\approx log(729l^2+663l+2916ml+729)-\frac{1391.22l^2+958.27l+5564.88ml+1391.22}{729l^2+663l+2916ml+729}.
\end{eqnarray*}
Similarly, we compute the remaning  index-entropies of $J$. Table $3$ shows some calculated graph index-entropies of $J$. \\

In this way, the topological index-based entropies for Y-junction graphs $J_1$, $J_2$, and $J_3$ are calculated.\\
 The index-based entropies of $J_1$  $J_2$, and $J_3$ are given in Tables $4$, $5$, and $6$.
\begin{table}[H]
	\centering
	\caption{Index-entropies of $J$}
	\setlength{\tabcolsep}{10pt}
	\renewcommand{\arraystretch}{1.9}
	\begin{tabular}{l cc}
		\hline
		Entropy & Values of entropies\\
		\hline
	$H_{\beta_3}(J)$	& log$(1458l^2+1446l+5832ml+1458)-\frac{3221.62l^2+2601.62l+12885.84ml+3221.46}{1458l^2+1446l+5832ml+1458}$\\
	$H_{\beta_4}(J)$	& log$(0.11l^2+0.62l+0.44ml+0.11)+\frac{0.207l^2+0.92l+0.828ml+0.207}{0.11l^2+0.62l+0.44ml+0.11}$\\
	$H_{\beta_5}(J)$	& log$(13122l^2+12170l+52488ml+13122)-\frac{41514.75l^2+15202.13l+166059.31ml+41514.75}{13122l^2+12170l+52488ml+13122}$\\
	$H_{\beta_6}(J)$	& log$(18l^2+44.86l+72ml+18)-\frac{5.41l^2+14.81l+21.67ml+5.41}{18l^2+44.86l+72ml+18}$\\
	$H_{\beta_7}(J)$	& log$(l^2+9.69l+4ml+1)+\frac{0.95l^2+2.72l+3.81ml+0.95}{l^2+9.69l+4ml+1}$\\
	$H_{\beta_8}(J)$	& log$(40.5l^2+59.24l+162ml+40.5)-\frac{26.45l^2+26.18l+105.82ml+26.45}{40.5l^2+59.24l+162ml+40.5}$\\
	\hline
	\end{tabular}
\end{table} 

\begin{table}[H]
	\centering
	\caption{Index-entropies of $J_1$}
	\setlength{\tabcolsep}{10pt}
	\renewcommand{\arraystretch}{1.9}
	\begin{tabular}{l cc}
		\hline
		Entropy & Values of entropies\\
		\hline
		$H_{\beta_1}(J_1)$	& log$(162l^2+314l+648ml+162)-\frac{203.31l^2+346.09l+813.24ml+203.31}{162l^2+314l+648ml+162}$\\
		$H_{\beta_2}(J_1)$	& log$(729l^2+957l+2916ml+729)-\frac{1391.22l^2+1523.54l+5564.88ml+1391.22}{729l^2+957l+2916ml+729}$\\
		$H_{\beta_3}(J_1)$	& log$(1458l^2+2074l+5832ml+1458)-\frac{3221.46l^2+3991l+12885.84ml+3221.46}{1458l^2+2074l+5832ml+1458}$\\
		$H_{\beta_4}(J_1)$	& log$(0.11l^2+0.72l+0.44ml+0.11)+\frac{0.207l^2+1.065l+0.897ml+0.207}{0.11l^2+0.72l+0.44ml+0.11}$\\
		$H_{\beta_5}(J_1)$	& log$(13122l^2+12170l+52488ml+13122)-\frac{41514.75l^2+32699.53l+166059ml+41514.75}{13122l^2+12170l+52488ml+13122}$\\
		$H_{\beta_6}(J_1)$	& log$(18l^2+56.328l+72ml+18)-\frac{5.41l^2+19.09l+21.67ml+5.41}{18l^2+56.328l+72ml+18}$\\
		$H_{\beta_7}(J_1)$	& log$(l^2+3.98l+4ml+1)+\frac{0.9l^2+3.21l+3.6ml+0.9}{l^2+3.98l+4ml+1}$\\
		$H_{\beta_8}(J_1)$	& log$(40.5l^2+75.15l+162ml+40.5)-\frac{26.37l^2+36.84l+105.48ml+26.37}{40.5l^2+75.15l+162ml+40.5}$\\
		\hline
	\end{tabular}
\end{table}
\begin{table}[h]
	\centering
	\caption{Index-entropies of $J_2$}
	\setlength{\tabcolsep}{10pt}
	\renewcommand{\arraystretch}{1.9}
	\begin{tabular}{l cc}
		\hline
		Entropy & Values of entropies\\
		\hline
		$H_{\beta_1}(J_2)$	& log$(162l^2+382l+648ml+162)-\frac{203.31l^2+430.65l+813.24ml+203.31}{162l^2+382l+648ml+162}$\\
		$H_{\beta_2}(J_2)$	& log$(729l^2+1251l+2916ml+729)-\frac{1391.22l^2+2088.77l+5564.88ml+1391.22}{729l^2+1251l+2916ml+729}$\\
		$H_{\beta_3}(J_2)$	& log$(1458l^2+2478l+5832ml+1458)-\frac{3221.46l^2+5380.37l+12885.84ml+3221.46}{1458l^2+2478l+5832ml+1458}$\\
		$H_{\beta_4}(J_2)$	& log$(0.11l^2+0.819l+0.44ml+0.11)+\frac{0.099l^2+1.007l+0.396ml+0.099}{0.11l^2+0.819l+0.44ml+0.11}$\\
		$H_{\beta_5}(J_2)$	& log$(13122l^2+17638l+52488ml+13122)-\frac{41514.75l^2+50196.95l+166059ml+50196.95}{13122l^2+17638l+52488ml+13122}$\\
		$H_{\beta_6}(J_2)$	& log$(18l^2+65.56l+72ml+18)-\frac{5.41l^2+23.44l+21.67ml+5.41}{18l^2+65.56l+72ml+18}$\\
		$H_{\beta_7}(J_2)$	& log$(l^2+4.57l+4ml+1)+\frac{0.9l^2+3.61l+3.6ml+0.9}{l^2+4.57l+4ml+1}$\\
		$H_{\beta_8}(J_2)$	& log$(40.5l^2+91.048l+162ml+40.5)-\frac{26.37l^2+46.387l+105.48ml+26.37}{40.5l^2+91.048l+162ml+40.5}$\\
		\hline
	\end{tabular}
\end{table}
\begin{table}[H]
	\centering
	\caption{Index-entropies of $J_3$}
	\setlength{\tabcolsep}{10pt}
	\renewcommand{\arraystretch}{1.7}
	\begin{tabular}{l cc}
		\hline 
		Entropy & Values of entropies\\
		\hline
		$H_{\beta_1}(J_3)$	& log$(162l^2+450l+648ml+162)-\frac{203.31l^2+515.23l+813.24ml+203.31}{162l^2+450l+648ml+162}$\\
		$H_{\beta_2}(J_3)$	& log$(729l^2+1545l+2916ml+729)-\frac{1391.22l^2+2654.14l+5564.88ml+1391.22}{729l^2+1545l+2916ml+729}$\\
		$H_{\beta_3}(J_3)$	& log$(1458l^2+3330l+5832ml+1458)-\frac{3221.46l^2+6769.75l+12885.84ml+3221.46}{1458l^2+3330l+5832ml+1458}$\\
		$H_{\beta_4}(J_3)$	& log$(0.11l^2+0.92l+0.44ml+0.11)+\frac{0.18l^2+1.35l+0.72ml+0.18}{0.11l^2+0.92l+0.44ml+0.11}$\\
		$H_{\beta_5}(J_3)$	& log$(13122l^2+23106l+52488ml+13122)-\frac{41514.75l^2+67694.4l+166059ml+41514}{13122l^2+23106l+52488ml+13122}$\\
		$H_{\beta_6}(J_3)$	& log$(18l^2+75.90l+72ml+18)-\frac{5.41l^2+27.79l+21.67ml+5.41}{18l^2+75.90l+72ml+18}$\\
		$H_{\beta_7}(J_3)$	& log$(l^2+5.090l+4ml+1)+\frac{0.9l^2+4.04l+3.6ml+0.9}{l^2+5.090l+4ml+1}$\\
		$H_{\beta_8}(J_3)$	& log$(40.5l^2+106.95l+162ml+40.5)-\frac{26.37l^2+56.44l+105.48ml+26.37}{40.5l^2+106.95l+162ml+40.5}$\\
		\hline
	\end{tabular}
\end{table}
\section{Numerical Results and Discussions}
The numerical values of topological indices and graph index-entropies of Y-junction graphs are computed in this section for some values of $l$ and $m$. In addition, we plot line and bar graphs for comparison of the obtained results. Here, we use the logarithm of the base $10$ for calculations.\\

The numerical values of topological indices for Y-junction graph $J$ are given in Table $7$. The logarithmic values of Table $7$ are plotted in Figure $2$. From the vertical axis of Figure $2$, we can conclude that for Y-junction graph $J$, the topological indices have the following order: $^{nm}M_2\leq NR_{-1/2} \leq NH\leq ND_5\leq NI \leq NM_1\leq NM_2\leq S \leq NF \leq ND_3$. The third NDe index has the most dominating nature compared to other topological indices, whereas neighborhood second modified Zagreb index grew slowly.
\begin{table}[H]
	\centering
	\caption{Numerical values of topological indices for Y-junction graph $J$}
	\setlength{\tabcolsep}{6pt}
	\renewcommand{\arraystretch}{1.7}
	\tiny\begin{tabular}{lccccccccccc}
		\hline 
		[$l,m$]& $NM_1(J)$&$NM_2(J)$&$NF(J)$&$N^mM_2(J)$&$NR_{-\frac{1}{2}}(J)$&$ND_3(J)$&$ND_5(J)$&$NH(J)$&$NI(J)$&$S(J)$\\
		\hline
		[2,2]&3894&16635&33510&3.55&20.7436&288966&467.72&40.38&968.92&25950.56\\
		
		[3,3]&8190&35523&71406&6.92&45.27693&623718&962.58&75.07&2040.63&55857.79\\
		
		[4,4]&14106&61701&123882&11.39&79.81026&1089690&1637.44&119.76&3517.34&97442.22\\
		
		[5,5]&21642&95169&190938&16.96&124.3436&1686882& 2492.3&174.45&5399.05&150703.9\\
		
		[6,6]&30798&135927&272574&23.63&178.8769&2415294&3527.16&239.14&7685.76&215642.7\\
		
		[7,7] &41574&183975&368790&31.4&243.4103&3274926&4742.02&313.83&10377.47&292258.7\\
		
		[8,8]&53970&239313&479586&40.27&317.9436&4265778&6136.88&398.52&13474.18&380551.9\\
		
		[9.9]&67986&301941&604962&50.24&440.4769&5387850&7711.74&493.21&16975.89&480522.4\\
		
		[10,10]&83622&371859&744918&61.31&497.0103&6641142&9466.6&597.9&20882.6&592170\\
		\hline
	\end{tabular}
\end{table}
\begin{figure}[H]
	\centering
	\includegraphics[width=3.2in]{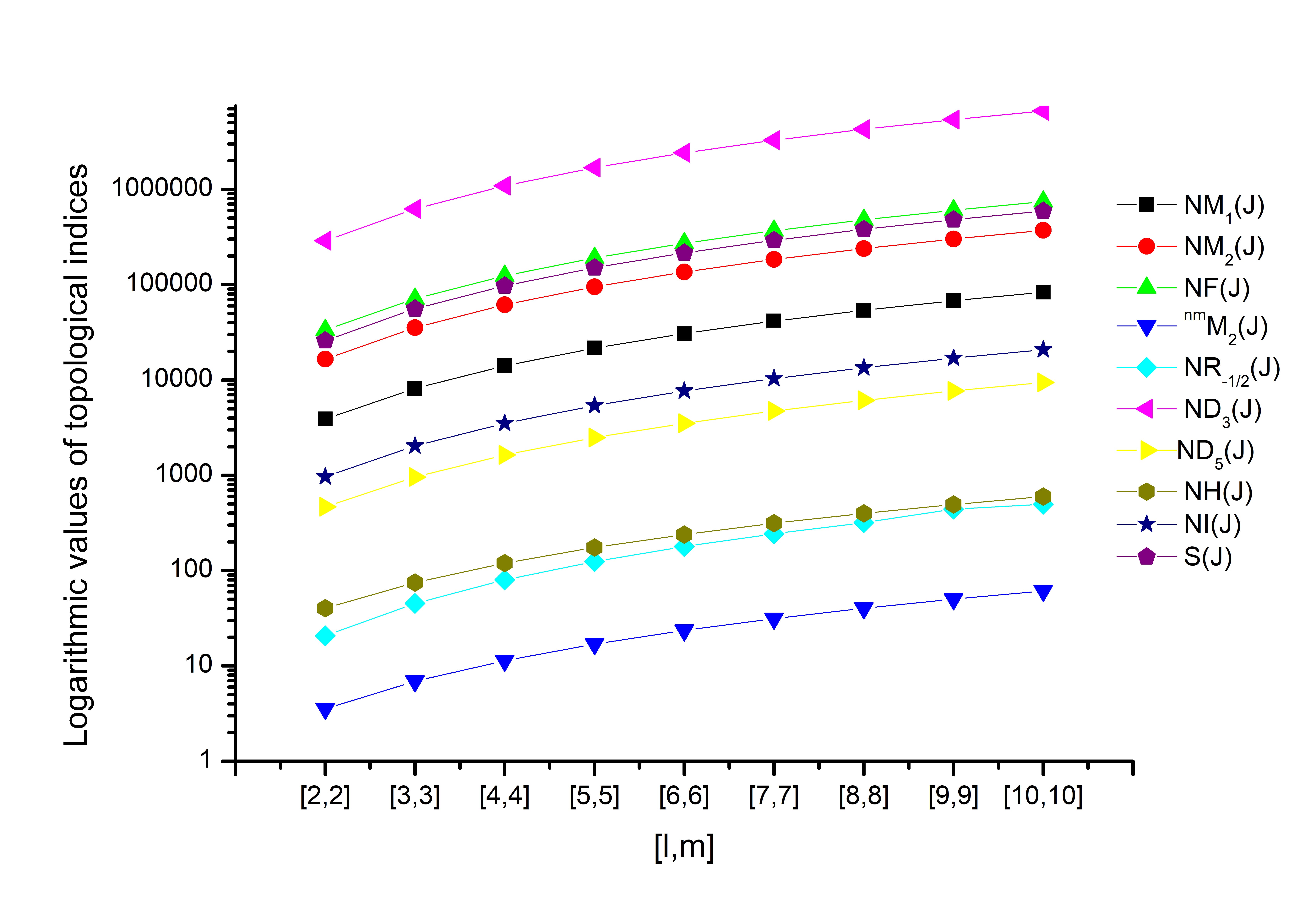}
	\caption{Graphical comparison among topological indices of Y-junction graph $J$ }
\end{figure}
Table $8$ shows some numerical values of topological indices for Y-junction graph $J_1$. The logarithmic values of these topological indices are plotted in Figure $3$. From Figure $3$, we can conclude that the topological indices for Y-junction graph $J_1$ have the following order: $^{nm} M_2\leq NH\leq NR_{-1/2}\leq ND_5\leq NI\leq NM_1\leq NM_2 \leq S \leq NF\leq  ND_3$. Also, we see that the logarithemic values of $NR_{-1/2}$ and $NH$ for $J_1$ are almost same.
\begin{table}[H]
	\centering
	\caption{Numerical values of topological indices for Y-junction graph $J_1$}
	\setlength{\tabcolsep}{6pt}
	\renewcommand{\arraystretch}{2}
	\tiny\begin{tabular}{lccccccccccc}
		\hline 
		[$l,m$]& $NM_1(J_1)$&$NM_2(J_1)$&$NF(J_1)$&$^		{nm}M_2(J_1)$&$NR_{-\frac{1}{2}}(J_1)$&$ND_3(J_1)$&$ND_5(J_1)$&$NH(J_1)$&$NI(J_1)$&$S(J_1)$\\
		\hline
		[2,2]&4030&17223&35166&3.75&29.34052&299902&490.656&28.96&1000.8&26879.94\\
		
		[3,3]&8394&36405&74190&7.22&58.51078&640122&996.984&57.94&2088.45&57251.86\\
		
		[4,4]&14378&62877&127994&11.79&97.68103&1111562&1683.312&96.92&3581.1&99300.98\\
		
		[5,5]&21982&96639&196578&17.46&146.8513&1714222& 2549.64&145.9&5478.75&153027.3\\
		
		[6,6]&31206&137691&279942&24.23&206.0216&2448102&3595.968&204.88&7781.4&218430.8\\
		
		[7,7] &42050&186033&378086&32.1&275.1918&3313202&4822.296&273.86&10489.05&295511.5\\
		
		[8,8]&54514&241665&491010&41.07&354.3621&4309522&6228.624&352.84&13601.7&384269.5\\
		
		[9.9]&68598&304587&618714&51.14&443.5323&5437062&7814.952&441.82&17119.35&484704.6\\
		
		[10,10]&84302&374799&761198&62.31&542.7026&6695822&9581.28&540.8&21042&596816.9\\
		\hline
	\end{tabular}
\end{table}
\begin{figure}[H]
	\centering
	\includegraphics[width=4in]{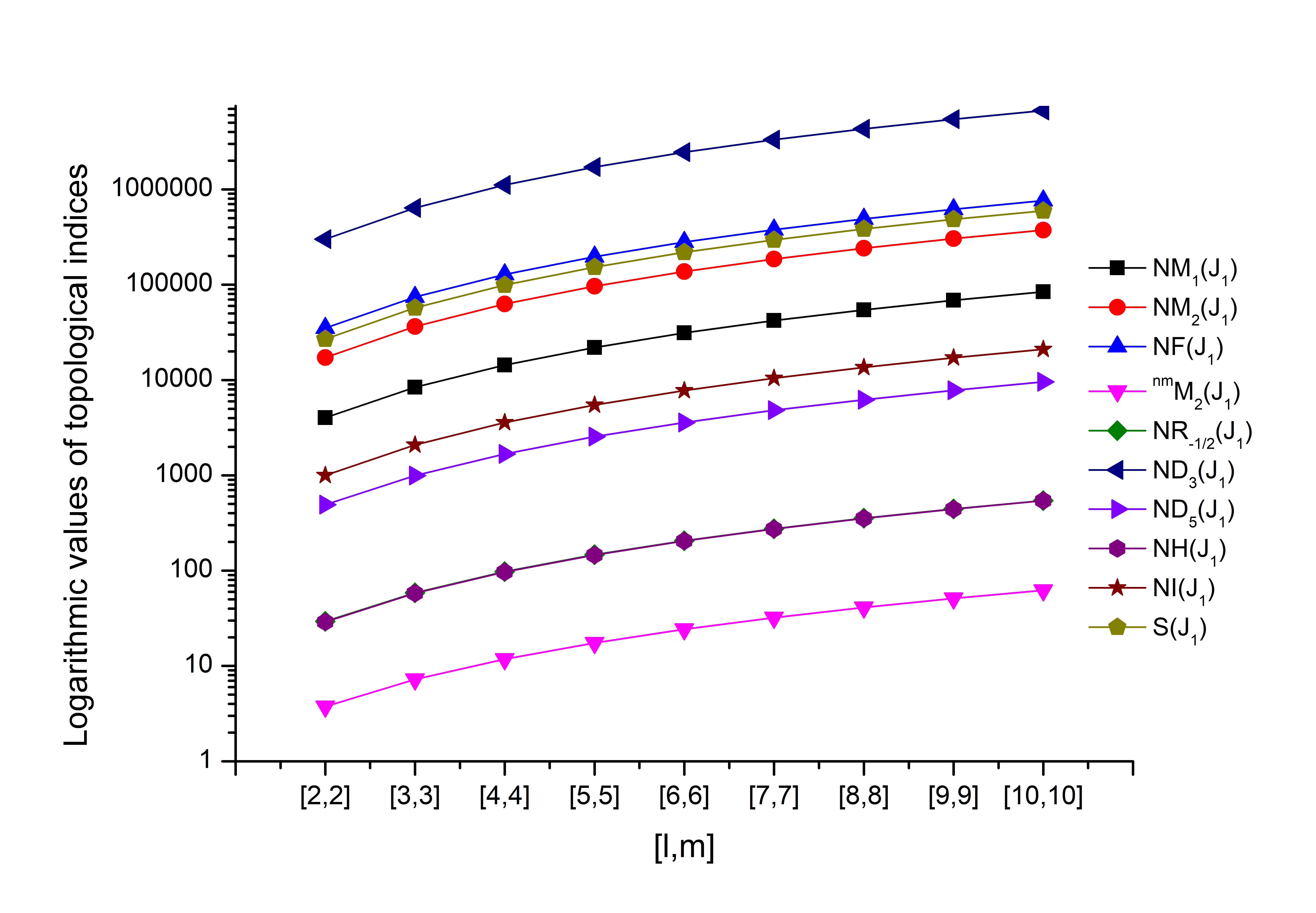}
	\caption{Graphical comparison among topological indices of Y-junction graph $J_1$ }
\end{figure}
Table $9$ shows some calculated values of topological indices for Y-junction graph $J_2$. The logarithmic values of these indices are plotted in Figure $4$. The vertical axis of Figure $4$ shows the comparison clearly. Figure $4$ shows that the logarithmic values of $ND_3$ are extremely high when compared to other topological indices of $J_2$. From Figure $4$, we see that the graph of $NR_{-1/2}$ and $NH$ are almost coincide.
\begin{table}[H]
	\centering
	\caption{Numerical values of topological indices for Y-junction graph $J_2$}
	\setlength{\tabcolsep}{6pt}
	\renewcommand{\arraystretch}{2}
	\tiny\begin{tabular}{lccccccccccc}
		\hline 
		[$l,m$]& $NM_1(J_2)$&$NM_2(J_2)$&$NF(J_2)$&$^{nm}M_2(J_2)$&$NR_{-\frac{1}{2}}(J_2)$&$ND_3(J_2)$&$ND_5(J_2)$&$NH(J_2)$&$NI(J_2)$&$S(J_2)$\\
		\hline
		[2,2]&4166&17811&35574&3.948&30.49121&310838&509.12&30.14&1032.596&27809.32\\
		
		[3,3]&8598&37287&74502&7.517&60.23681&656526&1024.68&59.71&2136.144&58645.93\\
		
		[4,4]&14650&64053&128010&12.186&99.98241&1133434&1720.24&99.28&3644.692&101159.7\\
		
		[5,5]&22322&98109&196098&17.955&149.728&1741562& 2595.8&148.85&5558.24&155350.8\\
		
		[6,6]&31614&139455&278766&24.824&209.2192&2480910&3651.36&208.42&7876.788&221219\\
		
		[7,7] &42526&188091&376014&32.793&279.2192&3351478&4886.92&277.99&10600.34&298764.4\\
		
		[8,8]&55058&244017&487842&41.862&358.9648&4353266&6302.48&357.56&13728.88&387987\\
		
		[9,9]&69210&307233&614250&52.031&448.7104&5486274&7898.48&447.13&17262.43&488886.8\\
		
		[10,10]&84982&377739&755238&63.3&548.456&6750502&9673.6&546.7&21200.98&601463.8\\
		\hline
	\end{tabular}
\end{table}
\begin{figure}[H]
	\centering
	\includegraphics[width=4in]{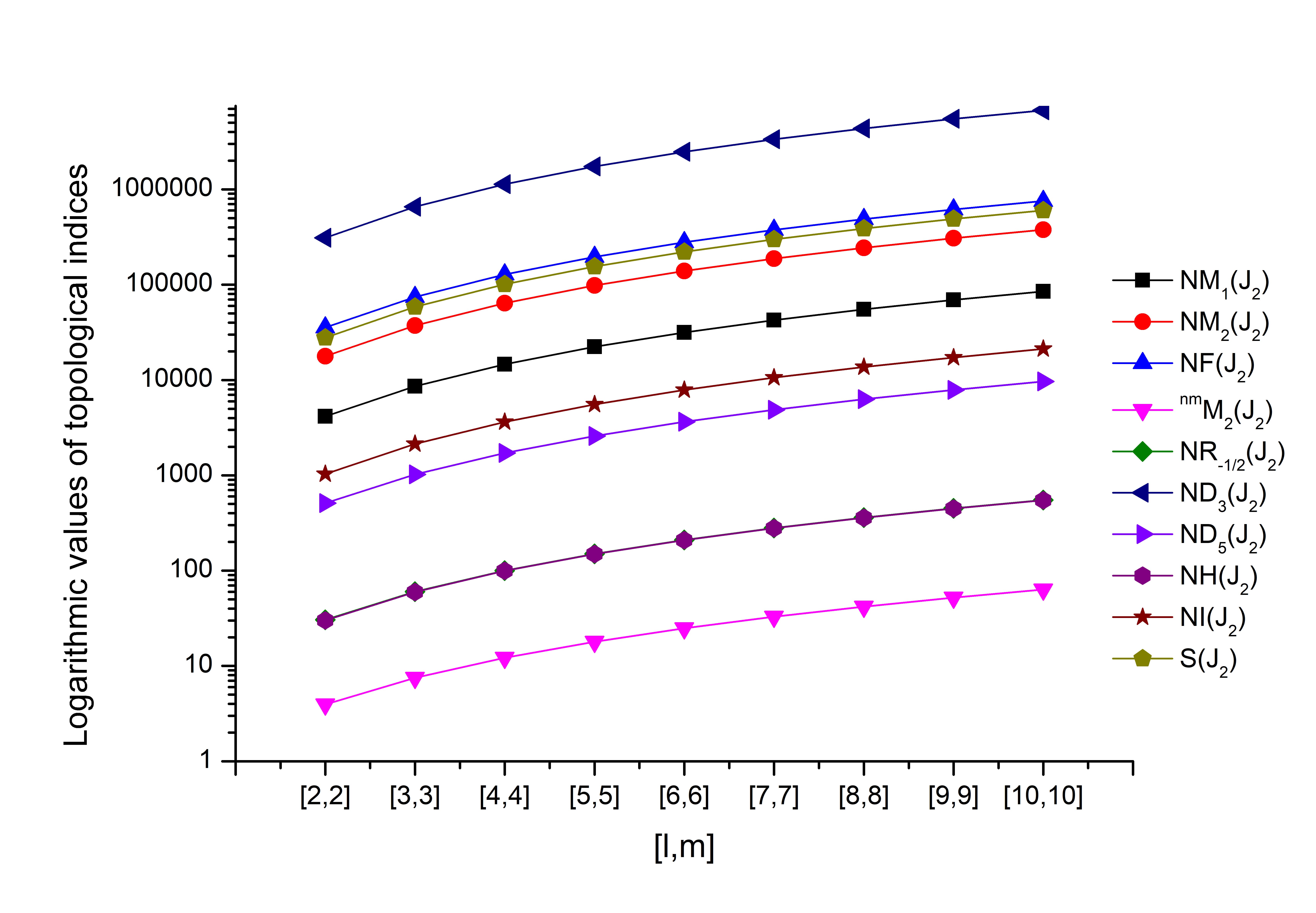}
	\caption{Graphical comparison among topological indices of Y-junction $J_2$ }
\end{figure}
Table $10$ shows some numerical values of topological indices of Y-junction $J_3$. Figure $5$ depicts the graphical comparison of these indices. Table $10$ and Figure $5$ show that the values of topological indices strictly increase as the values of $l$ and $m$ increases.\\
 From Tables $7$, $8$, $9$, and $10$, we see that as the values of $l$ and $m$ in Y-junction graphs increases, the corresponding values of topological indices grew very fastly. 
\begin{table}[H]
	\centering
	\caption{Numerical values of topological indices of Y-junction graph $J_3$}
	\setlength{\tabcolsep}{7pt}
	\renewcommand{\arraystretch}{1.8}
	\tiny\begin{tabular}{lccccccccccc}
		\hline 
		[$l,m$]& $NM_1(J_3)$&$NM_2(J_3)$&$NF(J_3)$&$N^mM_2(J_3)$&$NR_{-\frac{1}{2}}(J_3)$&$ND_3(J_3)$&$ND_5(J_3)$&$NH(J_3)$&$NI(J_3)$&$S(J_3)$\\
		\hline
		[2,2]&4302&18399&37278&4.15&31.6419&321774&529.8&31.18&1064.4&28694\\
		
		[3,3]&8802&38169&77058&7.82&61.9628&672930&1055.7&61.27&2183.85&59973\\
		
		[4,4]&14922&65229&131418&12.59&102.284&1155306&1761.6&101.36&3708.3&102929\\
		
		[5,5]&22662&99579&200358&18.46&152.605&1768902& 2647.5&151.45&5637.75&157562\\
		
		[6,6]&32022&141219&283878&25.43&212.926&2513718&3713.4&211.54&7972.2&223873\\
		
		[7,7] &43002&190149&381978&33.5&283.247&3389754&4959.3&281.63&10711.7&301861\\
		
		[8,8]&55602&246369&494658&42.67&363.568&4397010&6385.2&361.72&13856.1&391526\\
		
		[9,9]&69822&309879&621918&52.94&453.889&5535486&7991.1&451.81&17405.6&492868\\
		
		[10,10]&85662&380679&763758&64.31&554.209&6805182&9777&551.9&21360&605887\\
		\hline
	\end{tabular}
\end{table}
\begin{figure}[H]
	\centering
	\includegraphics[width=4in]{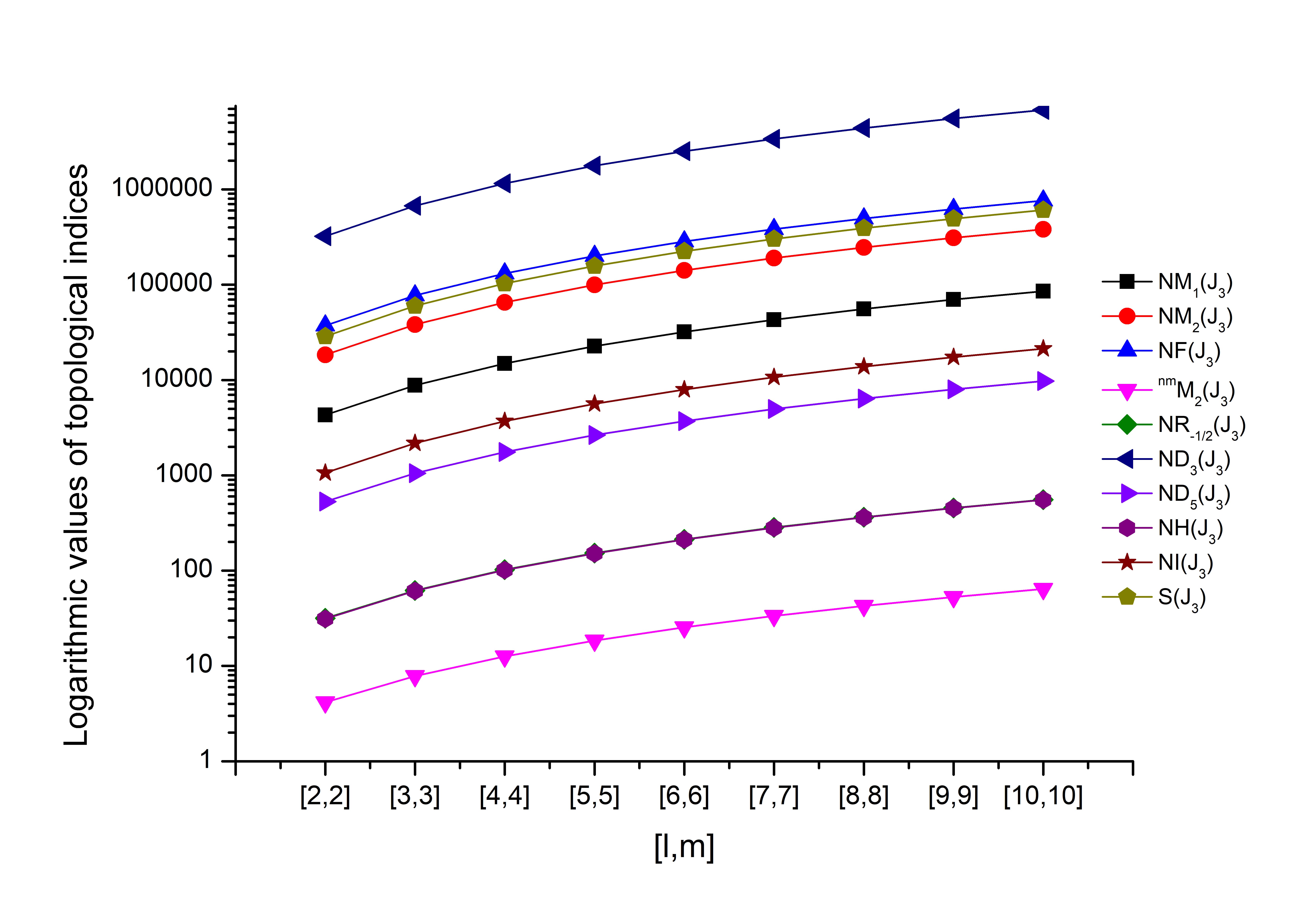}
	\caption{Graphical comparison among topological indices of Y-junction graph $J_3$}
\end{figure}
A few values of graph index-entropies of Y-junction graph $J$ are listed in Table $11$ and illustrated in Figure $6$. From Figure $6$, we see that entropy measures of $H_{\beta_1}$, $H_{\beta_2}$, $H_{\beta_3}$, and $H_{\beta_8}$ almost coincide. 
\begin{table}[h]
	\centering
	\caption{Numerical values of
		index-entropies of $J$}
	\setlength{\tabcolsep}{10pt}
	\renewcommand{\arraystretch}{1.8}
	\tiny\begin{tabular}{l|ccccccccc}
		\hline 
		[$l,m$]& $H_{\beta_1}(J)$&$H_{\beta_2}(J)$&$H_{\beta_3}(J)$&$H_{\beta_4}(J)$&$H_{\beta_5}(J)$&$H_{\beta_6}(J)$&$H_{\beta_7}(J)$&$H_{\beta_8}(J)$\\
		\hline
		[2,2]&2.363878&2.349537&2.351098&2.293045&2.469266&1.849911&2.235934&2.358964\\
		
		[3,3]&2.680031&2.668041&2.669162&2.614962&2.751708&2.162725&2.567487&2.674998\\
		
		[4,4]&2.912369&2.901799&2.902659&2.851695&2.966026&2.393078&2.813031&2.907277\\
		
		[5,5]&3.09586&3.086229&3.086919&3.038506&3.138274&2.575276& 3.00722&3.090734\\
		
		[6,6]&3.24743&3.238462&3.239031&3.192634&3.28218&2.725919&3.167437&3.242282\\
		
		[7,7] &3.37652&3.368045&3.368525&3.323745&3.40572&2.85433&3.303596&3.371357\\
		
		[8,8]&3.488993&3.480834&3.481247&3.437785&3.51397&2.966216&3.421864&3.483755\\
		
		[9.9]&3.588472&3.580678&3.581037&3.538669&3.610178&3.065343&3.526328&3.583289\\
		
		[10,10]&3.677788&3.670239&3.670554&3.629107&3.696846&3.154321&3.61983&3.672599\\
		\hline
	\end{tabular}
\end{table}
\begin{figure}[H]
\centering
\includegraphics[width=4in]{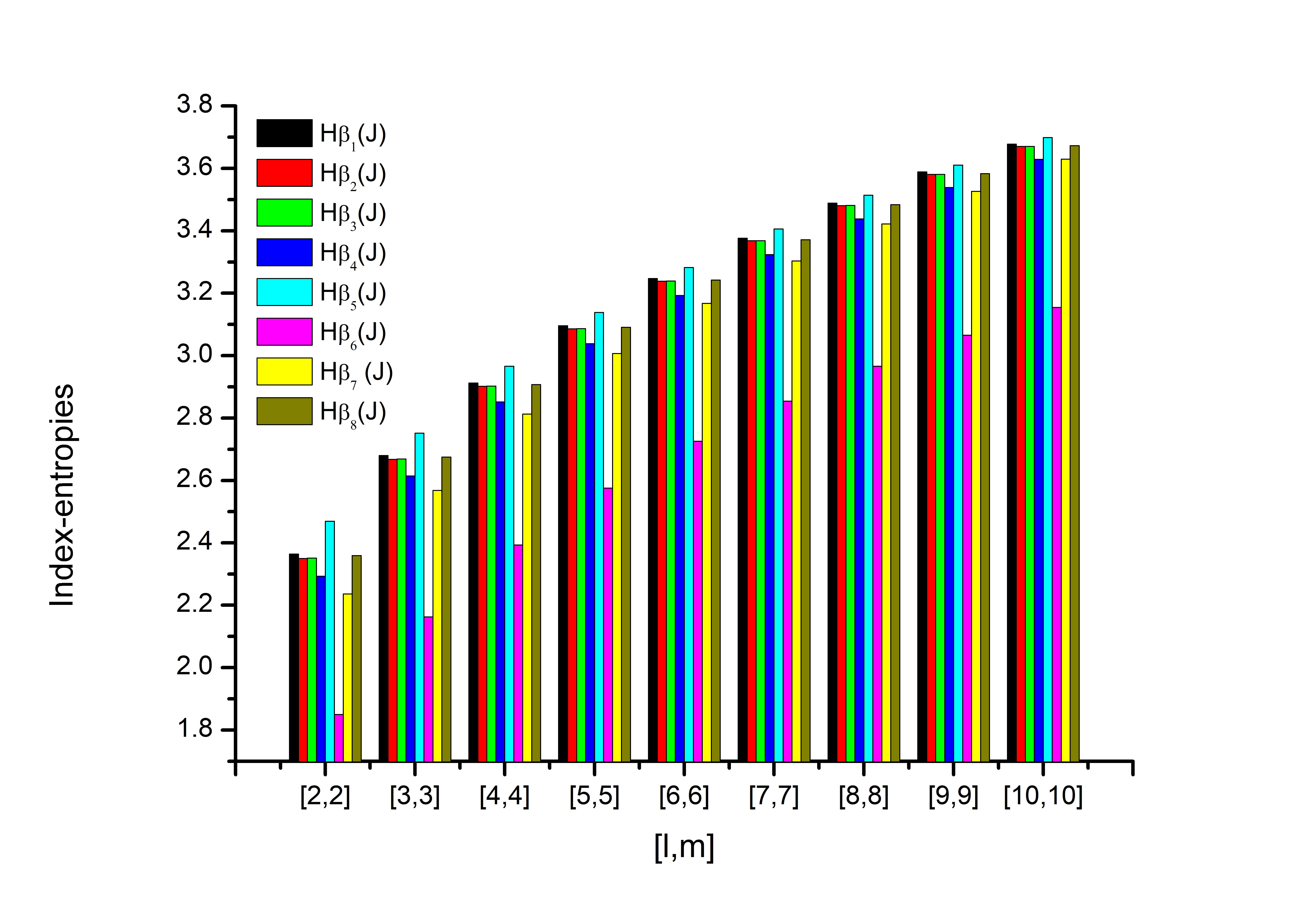}
\caption{Graphical comparison among index-entropies of J }
\end{figure}
The values of index-entropy of Y-junction graph $J_1$ is listed in Table $12$ and illustrated in Figure $7$. From Table $12$ and Figure $7$, we find that measures of  graph index-entropies $H_{\beta_1}$, $H_{\beta_2}$, $H_{\beta_3}$, $H_{\beta_5}$, $H_{\beta_6}$, and $H_{\beta_1}$ are almost same. 
\begin{table}[H]
\centering
\caption{Numerical values of
	index-entropies of $J_1$}
\setlength{\tabcolsep}{10pt}
\renewcommand{\arraystretch}{2}
\tiny\begin{tabular}{lccccccccc}
	\hline 
	[$l,m$]& $H_{\beta_1}(J_1)$&$H_{\beta_2}(J_1)$&$H_{\beta_3}(J_1)$&$H_{\beta_4}(J_1)$&$H_{\beta_5}(J_1)$&$H_{\beta_6}(J_1)$&$H_{\beta_7}(J_1)$&$H_{\beta_8}(J_1)$\\
	\hline
	[2,2]&2.374116&2.362875&2.365677&2.37483&2.351939&2.381171&2.336108&2.373399\\
	
	[3,3]&2.686117&2.677718&2.679751&2.705906&2.66972&2.691361&2.643717&2.686081\\
	
	[4,4]&2.916047&2.909359&2.91094&2.948613&2.903076&2.92019&2.871061&2.916411\\
	
	[5,5]&3.097981&3.092424&3.093709&3.139639&3.087253&3.101393& 3.051307&3.098605\\
	
	[6,6]&3.248463&3.243705&3.244784&3.2969&3.239312&3.251355&3.200605&3.24927\\
	
	[7,7] &3.376753&3.372588&3.373514&3.430435&3.368768&3.379258&3.23802&3.377694\\
	
	[8,8]&3.488549&3.484842&3.485651&3.546395&3.481462&3.490754&3.439143&3.489594\\
	
	[9.9]&3.587606&3.584263&3.584979&3.648847&3.58132&3.589572&3.537668&3.588733\\
	
	[10,10]&3.676529&3.673482&3.674123&3.740586&3.6707383&3.6783&3.626158&3.677722\\
	\hline
\end{tabular}
\end{table}
\begin{figure}[H]
	\centering
	\includegraphics[width=4in]{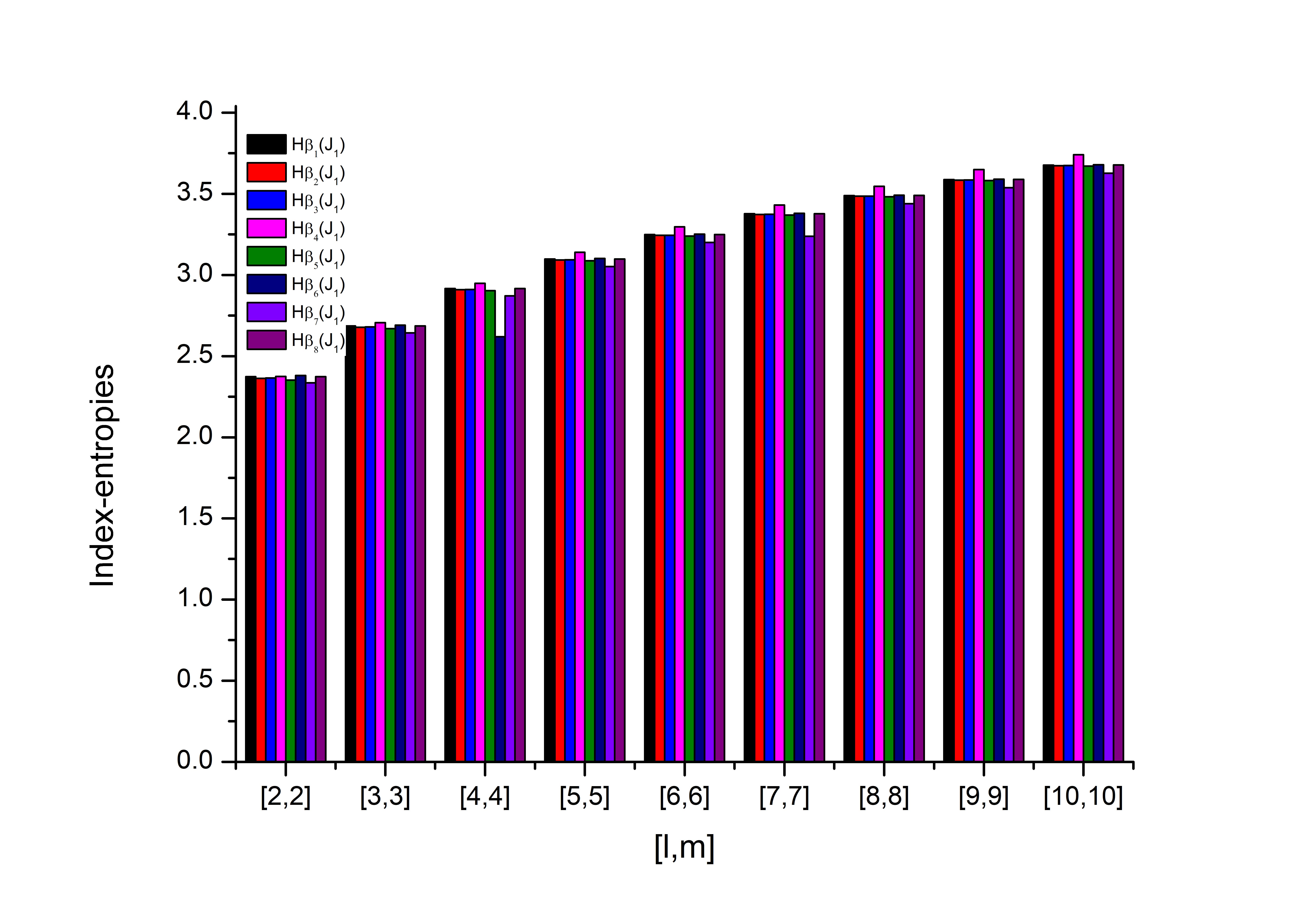}
	\caption{Graphical comparison among index-entropies of $J_1$ }
\end{figure}
Table $13$ depicts some graph index-entropies of Y-junction graph $J_2$. The graphical comparison of index-entropies of Y-junction graph $J_2$ is shown in Figure $8$. From Figure $8$, we see that graph index-entropies of $J_2$ increases as the values of $l$ and $m$ increases.
\begin{table}[H]
		\centering
		\caption{Numerical values of
			index-entropies of $J_2$}
		\setlength{\tabcolsep}{10pt}
		\renewcommand{\arraystretch}{1.8}
		\tiny\begin{tabular}{lccccccccc}
			\hline 
			[$l,m$]& $H_{\beta_1}(J_2)$&$H_{\beta_2}(J_2)$&$H_{\beta_3}(J_2)$&$H_{\beta_4}(J_2)$&$H_{\beta_5}(J_2)$&$H_{\beta_6}(J_2)$&$H_{\beta_7}(J_2)$&$H_{\beta_8}(J_2)$\\
			\hline
			[2,2]&2.388128&2.375827&2.346955&1.633105&2.336917&2.391354&2.345766&2.35432\\
			
			[3,3]&2.696411&2.687189&2.666479&1.88376&2.665889&2.698832&2.650775&2.672367\\
			
			[4,4]&2.924151&2.916793&2.9007&2.074455&2.902766&2.926068&2.876596&2.905747\\
			
			[5,5]&3.104655&3.098533&3.085384&2.229346&3.088295&3.106231& 3.055853&3.089892\\
			
			[6,6]&3.254134&3.248888&3.237774&2.360107&3.240916&3.255465&3.204459&3.241908\\
			
			[7,7] &3.381681&3.377087&3.367462&2.473394&3.370602&3.3882828&3.331363&3.371323\\
			
			[8,8]&3.492905&3.488816&3.480327&2.573399&3.48337&3.49391&3.442095&3.483978\\
			
			[9.9]&3.59151&3.587821&3.580028&2.662948&3.583139&3.592399&3.540309&3.583714\\
			
			[10,10]&3.680065&3.676703&3.659833&2.744042&3.672602&3.680861&3.628548&3.673185\\
			\hline
		\end{tabular}
	\end{table}
\begin{figure}[H]
	\centering
	\includegraphics[width=4in]{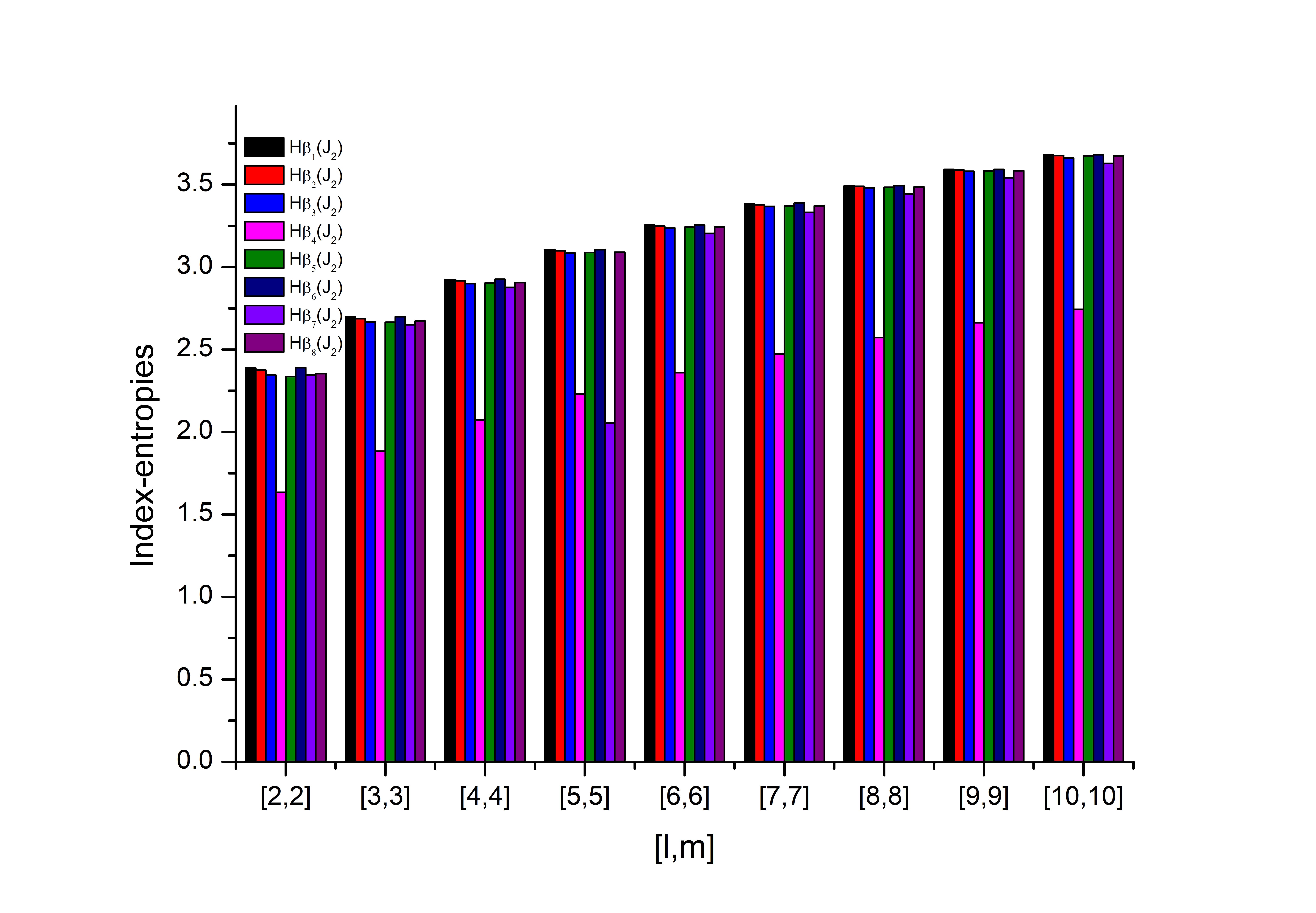}
	\caption{Graphical comparison among index-entropies of $J_2$ }
\end{figure}
In Table $14$, we calculate some graph index-entropies of Y-junction graph $J_3$. Figure $9$ shows the graphical comparison among index-entropies of $J_3$. From Table $14$ and Figure $9$, we see that index entropies $H_{\beta_1}$, $H_{\beta_2}$, $H_{\beta_3}$, $H_{\beta_6}$, and $H_{\beta_8}$ of $J_3$ are almost same. Also, Tables $11$, $12$, $13$, and $14$ shows that graph index-entropies of Y-junction graph increases as the values of $l$ and $m$ increases.
	\begin{table}[H]
		\centering
		\caption{Numerical values of
			index-entropies of $J_3$}
		\setlength{\tabcolsep}{10pt}
		\renewcommand{\arraystretch}{1.8}
		\tiny\begin{tabular}{lccccccccc}
			\hline 
			[$l,m$]& $H_{\beta_1}(J_3)$&$H_{\beta_2}(J_3)$&$H_{\beta_3}(J_3)$&$H_{\beta_4}(J_3)$&$H_{\beta_5}(J_3)$&$H_{\beta_6}(J_3)$&$H_{\beta_7}(J_3)$&$H_{\beta_8}(J_3)$\\
			\hline
			[2,2]&2.401692&2.388393&2.393488&2.179494&1.755856&2.404539&2.359174&2.40079\\
			
			[3,3]&2.706459&2.696449&2.7002&2.469933&2.242514&2.708584&2.660759&2.70624\\
			
			[4,4]&2.932101&2.924095&2.927043&2.687&2.571439&2.933776&2.884517&2.932298\\
			
			[5,5]&3.11224&3.104551&3.106972&2.86049&2.816575&3.112596& 3.062409&3.111698\\
			
			[6,6]&3.259727&3.254003&3.256051&3.005032&3.010781&3.260882&3.210048&3.260399\\
			
			[7,7] &3.38655&3.381534&3.383306&3.128925&3.171078&3.387542&3.336232&3.387369\\
			
			[8,8]&3.497215&3.492748&3.494307&3.237341&3.307302&3.498082&3.446407&3.498149\\
			
			[9.9]&3.595375&3.591345&3.592735&3.33372&3.425608&3.596141&3.544179&3.5964\\
			
			[10,10]&3.683569&3.679896&3.681147&3.420469&3.530087&3.684252&3.632058&3.684668\\
			\hline
		\end{tabular}
	\end{table}
\begin{figure}[H]
	\centering
	\includegraphics[width=4in]{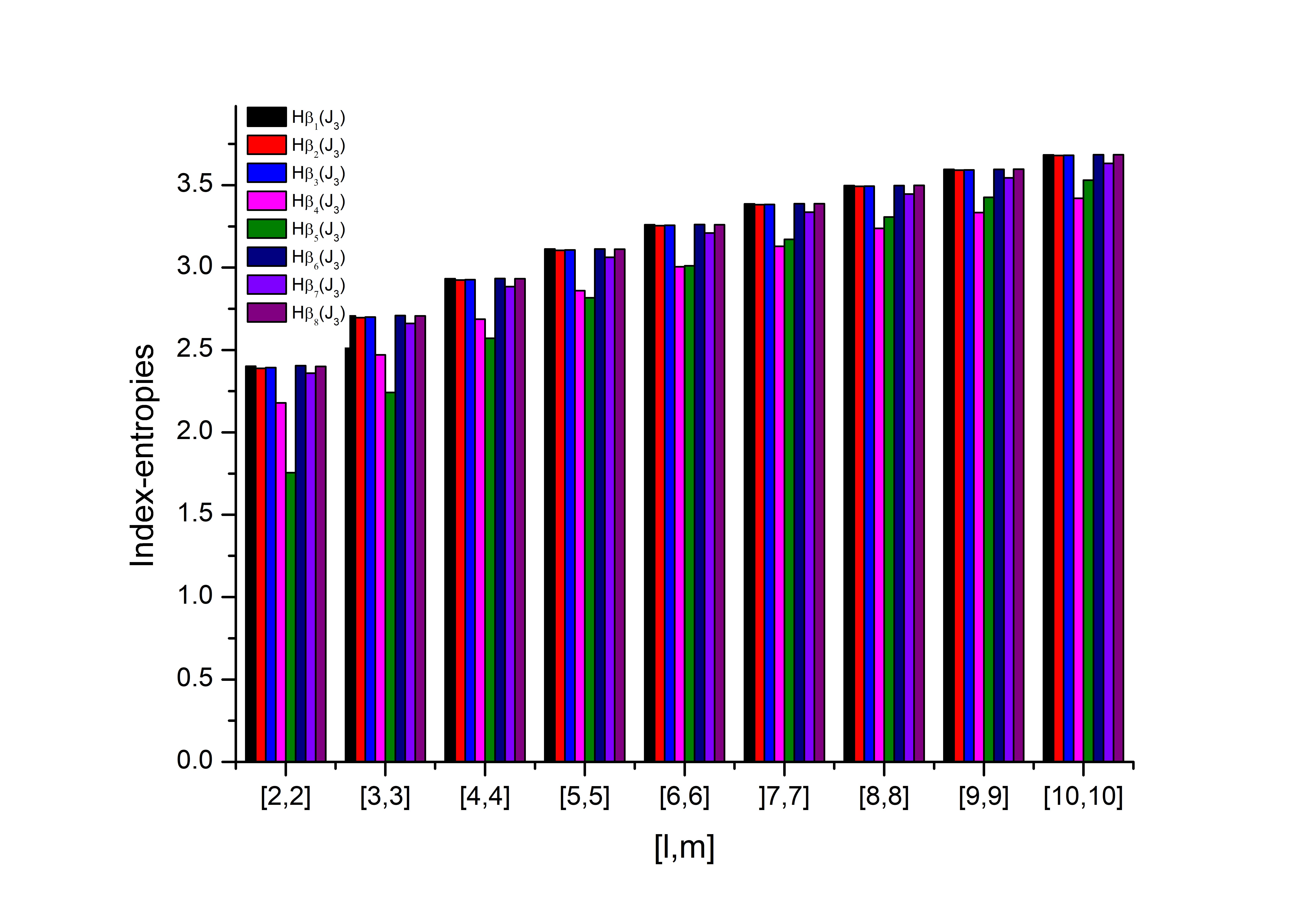}
	\caption{Graphical comparison among index-entropies of $J_3$ }
\end{figure}

\section{Conclusion and Future work} 
 In this study, the general expression of NM-polynomial for carbon nanotube Y-junction graphs is derived. Also, various neighborhood degree sum-based topological indices are retrieved from the expression of these polynomials. In addition, eight graph entropies in terms of these topological indices have been defined and calculated for Y-junction graphs. Furthermore, some numerical values of topological indices and index-entropies of Y-junction graphs are plotted for comparison. Since topological indices based on the degree of vertices has a significant ability to predict various physicochemical properties and biological activities of the chemical molecule. Therefore, the study's findings will be a viable option for predicting various physicochemical properties and understanding the structural problems of carbon nanotube Y-junctions.\\
 
 We mention some possible directions for future research, including multiplicative topological indices, graph index-entropies, regression models between the index-entropies and the topological indices, metric and edge metric dimension, etc., to predict thermochemical data, physicochemical properties, and structural information of carbon nanotube Y-junctions. \\

\noindent\textbf{Data Availability}\\
\noindent No data was used to support the findings of this study.\\

\noindent\textbf{Conflicts of Interest}\\ 
\noindent There are no conflicts of interest declared by the authors.\\

\noindent\textbf{Funding Statement}\\
\noindent The authors received no specific funding for this study.\\

\noindent\textbf{Author's Contribution Statement}\\
\noindent The final draft was written by \textbf{Sohan Lal} and \textbf{Vijay Kumar Bhat}. Figures and Tables are prepared by \textbf{Sohan Lal} and \textbf{Sahil Sharma}. All authors reviewed and edited the final draft.

\end{document}